\documentclass{article}

\PassOptionsToPackage{numbers, compress}{natbib}

\usepackage[preprint]{neurips_2026}

\usepackage[utf8]{inputenc} 
\usepackage[T1]{fontenc}    
\usepackage[hidelinks,hypertexnames=false]{hyperref} 
\usepackage{url}            
\usepackage{booktabs}       
\usepackage{amsfonts}       
\usepackage{amsmath,amssymb} 
\usepackage{mathtools}
\usepackage{nicefrac}       
\usepackage{microtype}      

\usepackage{enumitem}
\usepackage{graphicx}
\usepackage{multirow}
\usepackage[table]{xcolor}

\usepackage{algorithmic}
\usepackage{algorithm}
\usepackage{amsthm}

\newtheorem{proposition}{Proposition}[section]
\newtheorem{lemma}[proposition]{Lemma}
\newtheorem{corollary}[proposition]{Corollary}
\theoremstyle{remark}

\newcommand{\R}{\mathbb{R}}
\newcommand{\E}{\mathbb{E}}

\title{Generating the Unheard: Phylogeny-Guided Latent Generation for Ancestral Sound Reconstruction}

\author{%
  \textbf{Tianyi Xu}\textnormal{\textsuperscript{1}} \quad
  \textbf{Shrinaath Narasimhan}\textnormal{\textsuperscript{1}} \quad
  \textbf{Evan Gorstein}\textnormal{\textsuperscript{1}} \quad
  \textbf{Santiago Perea}\textnormal{\textsuperscript{1}} \\
  \textbf{Yunyi Shen}\textnormal{\textsuperscript{2}} \quad
  \textbf{Claudia Sol\'is-Lemus}\textnormal{\textsuperscript{1}}\thanks{Corresponding author: Claudia Sol\'is-Lemus. Correspondence: \texttt{txu223@wisc.edu}, \texttt{solislemus@wisc.edu}.} \\
  \textnormal{\textsuperscript{1}}University of Wisconsin--Madison \\
  \textnormal{\textsuperscript{2}}Massachusetts Institute of Technology
}

\begin{document}

\maketitle

\begin{abstract}
What did an ancestral bird species sound like? Existing ancestral state reconstruction methods can infer low-dimensional traits such as morphological characters at internal nodes of a phylogenetic tree, but no one has tried to produce rich perceptual signals such as audio. Some of the challenges include inferred representations that are either too low-dimensional to decode or lie in non-generative feature spaces, so no method to date can produce ancestral audio. 
We introduce the first framework that generates plausible ancestral vocalizations. Our pipeline encodes bird recordings into a VAE latent space, learns a low-dimensional trait projection aligned with phylogenetic distances, performs ancestral inference in this trait space, and recovers decodable latents through an anchored inverse lift before emitting novel waveforms for each ancestral node. Because the entire pipeline stays within a decodable latent space, every internal node receives a genuinely new audio output representing plausible intermediate ancestral sounds unavailable to retrieval-based alternatives. Experiments on two phylogenetically distant bird clades, 21-species Tyrannidae and 19-species Paridae, show that our method is the only approach that simultaneously achieves genuine generation, phylogenetic consistency, and naturalistic audio quality across both datasets.
\end{abstract}

\section{Introduction}

Ancestral state reconstruction (ASR) is a core tool in evolutionary biology~\citep{schluter1997likelihood, pagel1999inferring}, but existing methods are primarily designed for low-dimensional symbolic or continuous traits, such as morphology, size, or discrete character states~\citep{felsenstein1985phylogenies, revell2012phytools}. Many biologically important signals, however, are high-dimensional and perceptual. Bird vocalizations are a prime example: as spectro-temporal signals, their representations require hundreds to thousands of dimensions, yet comparative studies show they carry measurable phylogenetic signal even in vocal learners~\citep{arato2021phylogenetic} and that acoustic divergence correlates with evolutionary distance at macroevolutionary scales~\citep{pearse2018global, derryberry2018ecological}. Reconstructing what an ancestral species sounded like therefore requires going beyond classical trait inference (Figure~\ref{fig:motivation}).
Bird vocalizations also span a fundamental divide in vocal inheritance: suboscine species produce innate, genetically determined calls~\citep{kroodsma1984songs}, while oscine species incorporate culturally learned song elements~\citep{catchpole2008bird}, carrying different degrees of phylogenetic signal~\citep{arato2021phylogenetic}. Moreover, closely related species pairs---such as the cryptic \textit{Empidonax} flycatchers (\textit{E.\ traillii} and \textit{E.\ alnorum}), which are visually near-identical but distinguished primarily by voice---illustrate that vocal divergence can be a primary axis of speciation~\citep{slabbekoorn2002bird, stein1963isolating}, making ancestral sound reconstruction directly relevant to understanding how species form.

\begin{figure}[!tb]
  \centering
  \includegraphics[width=0.85\textwidth]{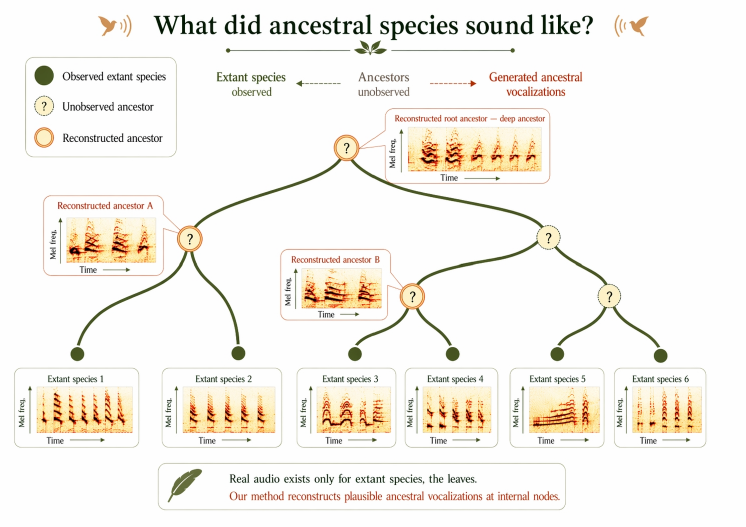}
  \vspace{-15pt}
\caption{\textbf{Motivation.} Extant species at tree leaves have observed vocalizations (mel spectrograms); internal nodes correspond to unobserved ancestors. Our method reconstructs ancestral vocalizations by performing evolutionary inference in a learned latent trait space and decoding back to audio, producing novel spectrograms consistent with descendant species.}
  \vspace{-18pt}
  \label{fig:motivation}
\end{figure}

Existing approaches operate in feature spaces such as acoustic descriptors~\citep{derryberry2018ecological}, classifier embeddings~\citep{kahl2021birdnet}, or learned taxonomic representations, none of which are \emph{decodable} back to audio. No existing method can therefore produce ancestral audio. The closest alternative is to select the most similar extant recording as a proxy, which cannot produce sounds intermediate between descendants or represent vocalizations that no extant species exhibits. The most natural generative alternative, performing Brownian motion ancestral inference~\citep{felsenstein1985phylogenies} directly in a VAE latent space~\citep{kingma2014auto}, also falls short: as we show in Section~\ref{sec:exp}, this naive baseline yields noisy, off-manifold spectrograms (Figure~\ref{fig:qualitative_main}).

We bridge this gap with a four-stage pipeline: (1)~encode extant vocalizations into a pretrained VAE latent space; (2)~learn a tree-metric trait projection aligned with phylogenetic distances; (3)~perform Brownian ancestral inference in this trait space and recover full decodable latents via an anchored inverse lift; (4)~emit novel waveforms through the VAE decoder and a vocoder. The key insight is that the entire pipeline stays within a decodable latent space while preserving phylogenetic signal, so every ancestral node receives a genuinely new audio output.

On two phylogenetically distant clades, 21-species Tyrannidae and 19-species Paridae, our method generates a completely unique waveform for every ancestor while maintaining the strongest phylogenetic plausibility and producing naturalistic spectrograms. Our contributions:
\begin{itemize}[nosep]
  \item The first generative framework for ancestral sound reconstruction, producing novel waveforms at every internal node of a phylogeny rather than selecting existing recordings.
  \item A latent-space decomposition separating phylogenetically informative variation from acoustic texture, enabling evolutionary inference in a compact trait subspace with a closed-form inverse lift back to decodable audio.
  \item Cross-clade generalization across the suboscine--oscine divide: innate vocalizations in Tyrannidae and culturally learned songs in Paridae.
  \item Evaluation protocols and metrics for ancestral audio generation, a task where no ground-truth recordings exist.
\end{itemize}

\section{Related Work}

\noindent\textbf{Ancestral state reconstruction.}\quad
Classical ASR methods infer ancestral values for low-dimensional traits via maximum likelihood~\citep{schluter1997likelihood, pagel1999inferring} or Bayesian estimation~\citep{huelsenbeck2003stochastic, revell2012phytools}, assuming Brownian motion~\citep{felsenstein1985phylogenies} or Ornstein--Uhlenbeck~\citep{hansen1997stabilizing, butler2004phylogenetic} trait evolution. These methods are effective for scalar morphological measurements~\citep{harmon2010early} but do not extend to high-dimensional perceptual signals where $D \gg n$.
Recent work couples generative models with phylogenetic priors: Draupnir~\citep{moreta2022draupnir} uses a tree-structured OU process as a VAE prior for ancestral protein reconstruction, and Phylo-Diffusion~\citep{khurana2024phylodiffusion} conditions diffusion models on tree-of-life hierarchical embeddings to generate images of hypothetical ancestors. Our work differs by operating on audio, using a pretrained generative model and separately learned trait projection, and introducing an anchored inverse lift for the trait-to-latent inverse.

\noindent\textbf{Phylogenetic representation learning.}\quad
Several lines of work learn representations incorporating phylogenetic structure: hyperbolic embeddings for hierarchies~\citep{nickel2017poincare, nickel2018learning, klimovskaia2020poincare}, Siamese Poincar\'e networks for reconstructing evolutionary trees from spectrograms~\citep{carvallo2025siamese}, phylogenetic augmentation for contrastive representation learning~\citep{lu2020evolution}, and phylogenetic PCA~\citep{revell2009phylogenetic}. \citet{mitteroecker2025blomberg} find linear subspaces maximizing Blomberg's $K$~\citep{blomberg2003testing} via eigendecomposition, sharing our goal of phylogenetically informative projections. PhyloNN~\citep{elhamod2023phylonn} learns phylogenetically supervised embeddings via hierarchical classification losses. Our trait learner differs from these approaches in optimizing within a latent space with a patristic distance correlation objective and an orthonormality constraint that enables a closed-form inverse lift. Separately, \citet{li2020latent} decompose VAE latent spaces into attribute-relevant and residual subspaces for image editing. Our row-space and nullspace decomposition serves an analogous role but is learned from phylogenetic distances for ancestral reconstruction.

\noindent\textbf{Neural audio generation and computational bioacoustics.}\quad
Modern audio synthesis combines VAE latent spaces~\citep{kingma2014auto} with neural vocoders~\citep{kong2020hifi, lee2023bigvgan}; diffusion and text-conditioned models~\citep{kong2021diffwave, liu2023audioldm, huang2023makeanaaudio2} further advance quality, but none incorporate evolutionary priors. In bioacoustics, deep learning targets classification, including BirdNET~\citep{kahl2021birdnet}, PANNs~\citep{kong2020panns}, and CLAP~\citep{elizalde2023clap}, as well as representation learning via self-supervised~\citep{hagiwara2023aves} and embedding-based~\citep{ghani2023global} approaches. More broadly, invariance-driven speech representations~\citep{xu2026sita} demonstrate that learned audio embeddings can disentangle biologically relevant factors from nuisance variation. VAE latent spaces capture meaningful vocal structure~\citep{sainburg2020finding}, and comparative studies examine vocal evolution using acoustic features~\citep{derryberry2018ecological, mikula2021global} but remain in non-decodable feature spaces. We bridge audio generation and phylogenetic inference: a pretrained VAE and vocoder serves as a fixed synthesis backend, while our tree-metric trait pipeline produces ancestral vocalizations for the first time.

\section{Method}

\subsection{Setup and Notation}

Let $T=(\mathcal{V}, \mathcal{E})$ be a rooted phylogenetic tree with positive branch lengths $\{w_e\}_{e \in \mathcal{E}}$. The leaf set $\mathcal{L} \subset \mathcal{V}$, $|\mathcal{L}|=n$, indexes extant species; the internal nodes $\mathcal{U} = \mathcal{V} \setminus \mathcal{L}$ are unobserved ancestors. For each $s \in \mathcal{L}$ we observe a collection of audio clips $\{a_{s,i}\}_{i=1}^{n_s}$. Denote by $d_T(s,t) = \sum_{e \in \mathrm{path}(s,t)} w_e$ the patristic distance between $s,t \in \mathcal{V}$ and by $\mathrm{depth}(v) = d_T(\mathrm{root}, v)$ the root-to-$v$ distance for $v \in \mathcal{V}$. The goal is to produce, for each $u \in \mathcal{U}$, a waveform realization consistent with the tree topology, branch lengths, and the extant vocalization distribution.

We assume access to a pretrained audio VAE with encoder $E_{\mathrm{vae}}$, decoder $D_{\mathrm{vae}}$, and a neural vocoder $V$. Each clip is encoded as $x_{s,i} = \mathrm{vec}(E_{\mathrm{vae}}(a_{s,i})) \in \R^D$, and the composition $V \circ D_{\mathrm{vae}}$ maps any latent vector back to a waveform. Our pipeline operates in four stages, \emph{encode}, \emph{evolve}, \emph{lift}, and \emph{emit}, detailed in Sections~\ref{sec:trait}--\ref{sec:lift} and illustrated in Figure~\ref{fig:pipeline}.

\begin{figure}[t]
  \centering
  \includegraphics[width=\textwidth]{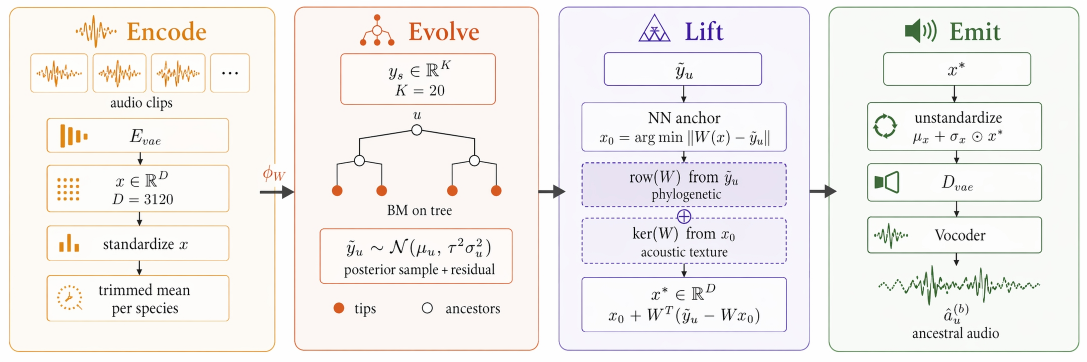}
  \vspace{-20pt}
  \caption{\textbf{Pipeline overview.} Audio clips are encoded into flattened VAE latents $x\in\mathbb{R}^D$ (\emph{Encode}), projected via $W$ into a $K$-dimensional trait space $y\in\mathbb{R}^K$ where Brownian-motion ancestral state reconstruction yields posterior trait vectors $\tilde{y}_u$ for each internal node $u$ (\emph{Evolve}), lifted back to $\mathbb{R}^D$ by combining the phylogenetic component $\mathrm{row}(W)$ from $\tilde{y}_u$ with the acoustic texture $\ker(W)$ from an anchor tip $x_0$ (\emph{Lift}), and decoded through the VAE decoder and vocoder to produce ancestral audio $\hat{a}_u^{(b)}$ (\emph{Emit}).}
  \vspace{-15pt}
  \label{fig:pipeline}
\end{figure}

\subsection{Tree-Metric Trait Space}\label{sec:trait}

Direct phylogenetic inference in $\R^D$ is challenging: although the $n \times n$ BM covariance is cheap to invert, running independent coordinate-wise inference across $D \gg n$ dimensions produces samples that lie off the decodable data manifold, yielding noisy audio (Figure~\ref{fig:qualitative_main}), and generic dimensionality reduction (e.g.\ PCA) does not preserve evolutionary geometry. We instead learn a linear projection $W \in \R^{K \times D}$, $K \ll D$, that defines a \emph{trait representation}
\begin{equation}\label{eq:trait}
  \phi_W(x) \coloneqq Wx.
\end{equation}
The row-orthonormality constraint on $W$ (defined below) ensures that directions in $\R^D$ contribute on a comparable scale after projection.

\smallskip\noindent\textbf{Objective.}\quad
Let $\bar{x}_s = n_s^{-1}\sum_i x_{s,i} \in \R^D$ denote the mean latent for species $s$. We optimize $W$ to maximize the Pearson correlation between pairwise squared trait distances and patristic distances:
\begin{equation}\label{eq:trait-obj}
  \max_{W \in \mathcal{W}}\; \rho\!\Bigl(
    \bigl\{d_T(s,t)\bigr\}_{s<t},\;
    \bigl\{\|\phi_W(\bar{x}_s) - \phi_W(\bar{x}_t)\|_2^2\bigr\}_{s<t}
  \Bigr),
\end{equation}
where $\rho$ denotes the Pearson correlation over all $\binom{n}{2}$ unordered species pairs and $\mathcal{W} = \{W \in \R^{K \times D} : WW^\top = I_K\}$ is the Stiefel manifold $\mathrm{St}(K, D)$. The use of squared Euclidean distances rather than raw distances is motivated by the fact that under Brownian motion on the tree, $\E[\|Y_s - Y_t\|^2]$ is linear in $d_T(s,t)$ when each coordinate evolves as an independent unit-rate Wiener process.

\smallskip\noindent\textbf{Orthonormality via Cholesky reparametrization.}\quad
We constrain $W$ to have orthonormal rows. Direct projection onto the Stiefel manifold after each gradient step can be numerically unstable for $K \ll D$~\citep{wen2013feasible, lezcano2019cheap}. Instead, we introduce an unconstrained parametrization through a free matrix $A \in \R^{K \times D}$ and define
\begin{equation}\label{eq:chol}
  W = L^{-1}A, \qquad L L^\top = A A^\top + \varepsilon I_K,
\end{equation}
where $L \in \R^{K \times K}$ is the (unique) lower-triangular Cholesky factor and $\varepsilon > 0$ is a small regularizer for numerical stability.

\begin{proposition}[Exact row-orthonormality]\label{prop:orth}
Let $A \in \R^{K \times D}$ with $\mathrm{rank}(A) = K$, and let $L$ be the lower-triangular Cholesky factor of $AA^\top + \varepsilon I_K$. In the limit $\varepsilon \to 0$, $W = L^{-1}A$ satisfies $WW^\top = I_K$. For any fixed $\varepsilon > 0$, the deviation satisfies $\|WW^\top - I_K\|_{\mathrm{op}} \le \varepsilon / \sigma_{\min}^2(A)$ (proof in Appendix~\ref{app:lift_proofs}).
\end{proposition}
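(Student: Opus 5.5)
The plan is to compute $WW^\top$ in closed form and reduce both claims to the eigenvalues of a single $K\times K$ positive definite matrix. Write $G \coloneqq AA^\top$. Since $\mathrm{rank}(A)=K$, $G$ is symmetric positive definite with eigenvalues $\lambda_i = \sigma_i^2(A) \ge \sigma_{\min}^2(A) > 0$. By definition $LL^\top = G + \varepsilon I_K$, so
\begin{equation*}
  WW^\top = L^{-1} A A^\top L^{-\top} = L^{-1} G L^{-\top}.
\end{equation*}
Substituting $G = LL^\top - \varepsilon I_K$ then gives
\begin{equation*}
  WW^\top = I_K - \varepsilon\, L^{-1}L^{-\top} = I_K - \varepsilon\,(L^\top L)^{-1}.
\end{equation*}
This identity is the core of the argument; everything else is a spectral bound on the correction term.

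Next, bound $\|\varepsilon (L^\top L)^{-1}\|_{\mathrm{op}}$. The matrices $L^\top L$ and $LL^\top = G+\varepsilon I_K$ are similar, since $L^\top L = L^{-1}(LL^\top)L$, so they share the eigenvalues $\lambda_i + \varepsilon$. Because $L^\top L$ is symmetric positive definite, its inverse has operator norm $1/\lambda_{\min}(L^\top L) = 1/(\sigma_{\min}^2(A)+\varepsilon)$. Hence
\begin{equation*}
  \|WW^\top - I_K\|_{\mathrm{op}} = \frac{\varepsilon}{\sigma_{\min}^2(A)+\varepsilon} \le \frac{\varepsilon}{\sigma_{\min}^2(A)},
\end{equation*}
which is the stated deviation bound. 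It is in fact slightly sharper than claimed, and it also shows $WW^\top \preceq I_K$.

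For the limit statement, let $\varepsilon\to 0$ with $A$ fixed. The bound goes to zero because $\sigma_{\min}(A)>0$ does not depend on $\varepsilon$. Alternatively, one can argue via continuity of the Cholesky factorization on positive definite matrices: $L_\varepsilon \to L_0$, the Cholesky factor of $G$, and then $L_0^{-1} G L_0^{-\top} = I_K$ exactly.

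The only step needing care is the passage from $L^{-1}L^{-\top}$ to an operator-norm bound. The triangular factor $L$ is not symmetric, so one cannot directly read off its singular values from $G$. The similarity between $L^\top L$ and $LL^\top$ resolves this. Equivalently, $\|L^{-1}\|_{\mathrm{op}}^2 = 1/\sigma_{\min}^2(L)$ and $\sigma_i^2(L) = \lambda_i(LL^\top)$. I expect no further obstacles.
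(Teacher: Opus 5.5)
Your proposal is correct and follows essentially the same route as the paper: derive $WW^\top = I_K - \varepsilon\,(L^\top L)^{-1}$, then use that $L^\top L$ and $LL^\top = AA^\top + \varepsilon I_K$ share eigenvalues to get the exact value $\varepsilon/(\sigma_{\min}^2(A)+\varepsilon)$ and hence the stated bound. Your explicit similarity $L^\top L = L^{-1}(LL^\top)L$ is just a more concrete form of the paper's appeal to $M^\top M$ and $MM^\top$ having the same spectrum.
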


Gradients of the objective~\eqref{eq:trait-obj} flow through $A$ via the Cholesky decomposition. After each optimizer step, $W$ is recomputed from~\eqref{eq:chol}. The row-orthonormality constraint serves two purposes: (i)~since the Pearson correlation objective is scale-invariant, it removes scale degeneracy, ensuring uniqueness up to rotation and preventing rank collapse; (ii)~it enables the closed-form anchored inverse lift in Section~\ref{sec:lift}, which requires $WW^\top = I_K$. A residual $K \times K$ orthogonal non-identifiability remains, but this is harmless: the squared distances in~\eqref{eq:trait-obj} and the row-space/nullspace decomposition in Section~\ref{sec:lift} are both invariant to such rotations (Appendix~\ref{app:rotation_inv}).

\smallskip\noindent\textbf{Species aggregation.}\quad
After $W$ is learned using simple-mean representatives $\bar{x}_s$, we form robust species-level tip traits for downstream BM inference via a \emph{trimmed mean} in trait space. For each $s \in \mathcal{L}$, compute clip-level traits $z_{s,i} = \phi_W(x_{s,i})$ and the preliminary mean $\bar{z}_s^{(0)} = n_s^{-1}\sum_i z_{s,i}$, then retain the closest fraction $1-\alpha$ of clips:
\begin{equation}\label{eq:agg}
  y_s = \frac{1}{|\mathcal{I}_s^{\mathrm{keep}}|} \sum_{i \in \mathcal{I}_s^{\mathrm{keep}}} z_{s,i}, \qquad
  \mathcal{I}_s^{\mathrm{keep}} = \bigl\{i \in [n_s] : \|z_{s,i} - \bar{z}_s^{(0)}\|_2 \le \delta_s^{(\lceil(1-\alpha)n_s\rceil)}\bigr\},
\end{equation}
where $\delta_s^{(k)}$ denotes the $k$-th smallest value of $\{\|z_{s,i} - \bar{z}_s^{(0)}\|_2\}_{i=1}^{n_s}$ and $\alpha \in [0,1)$ controls robustness to outlier clips. These $\{y_s\}_{s \in \mathcal{L}}$ serve as the observed tip traits for phylogenetic inference.

\subsection{Brownian Ancestral Inference}\label{sec:bm}

We model trait evolution on the tree $T$ as a $K$-dimensional Brownian motion (BM). Under this model, each coordinate evolves independently as a unit-rate Wiener process along the branches of $T$, inducing a joint Gaussian distribution over all node traits with covariance structure determined by shared evolutionary history.

\smallskip\noindent\textbf{Covariance structure.}\quad
For any two nodes $i, j \in \mathcal{V}$, the BM covariance equals the depth of their most recent common ancestor:
\begin{equation}\label{eq:bm-cov}
  C_{ij} = \mathrm{depth}\bigl(\mathrm{MRCA}(i,j)\bigr).
\end{equation}
In particular, $C_{ii} = \mathrm{depth}(i)$ gives the variance at node $i$. Since $C_{ij} = \tfrac{1}{2}(C_{ii} + C_{jj} - d_T(i,j))$ by the tree metric identity, the matrix $C$ is determined by the tree topology and branch lengths.

Denote the tip-tip submatrix of this $|\mathcal{V}| \times |\mathcal{V}|$ covariance by $C_{\mathcal{LL}} \in \R^{n \times n}$. To account for within-species noise and finite-sample effects, we add a nugget term:
\begin{equation}\label{eq:nugget}
  \Sigma_{\mathcal{LL}} = C_{\mathcal{LL}} + \lambda I_n, \qquad \lambda > 0.
\end{equation}

\smallskip\noindent\textbf{Trait standardization.}\quad
Before conditioning, we standardize tip traits coordinate-wise to zero mean and unit variance ($\bar{y}_s = (\hat{\sigma}_y)^{-1} \odot (y_s - \hat{\mu}_y)$), making the unit-rate BM assumption reasonable without estimating a per-coordinate diffusion rate $\sigma^2_k$.

\begin{proposition}[BM conditional posterior]\label{prop:bm}
Under the BM model with covariance~\eqref{eq:bm-cov} and nugget~\eqref{eq:nugget}, the posterior distribution of the trait at internal node $u \in \mathcal{U}$, conditioned on standardized tip traits $\bar{Y}_{\mathcal{L}} \in \R^{n \times K}$, factorizes across coordinates. For each coordinate $j \in \{1,\ldots,K\}$:
\begin{equation}\label{eq:bm-post}
  \bar{Y}_{u,j} \mid \bar{Y}_{\mathcal{L},j} \;\sim\; \mathcal{N}\!\bigl(
    c_u^\top \Sigma_{\mathcal{LL}}^{-1} \bar{Y}_{\mathcal{L},j},\;\;
    C_{uu} - c_u^\top \Sigma_{\mathcal{LL}}^{-1} c_u
  \bigr),
\end{equation}
where $c_u \in \R^n$ is the cross-covariance vector with entries $(c_u)_i = C_{ui}$ for $i \in \mathcal{L}$, and $C_{uu} = \mathrm{depth}(u)$. This follows from standard Gaussian conditioning on the multivariate normal induced by BM on a tree~\citep{felsenstein1985phylogenies, martins1997phylogenies, revell2012phytools}. A self-contained derivation is in Appendix~\ref{app:lift_proofs}.
\end{proposition}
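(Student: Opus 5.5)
The plan is to write down the joint Gaussian law of all node traits under the BM model with the nugget, and then apply the standard Schur-complement conditioning formula one coordinate at a time.

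\emph{Step 1: the joint law.} Fix a coordinate $j$. Under unit-rate BM started at $0$ at the root, the vector $(\bar Y_{v,j})_{v\in\mathcal V}$ is a centered Gaussian process indexed by the nodes. For any two nodes $i,j$, the paths from the root share exactly the segment from the root to $\mathrm{MRCA}(i,j)$. Wiener increments on disjoint branches are independent. Hence the covariance is the variance accumulated on the shared segment, namely $\mathrm{depth}(\mathrm{MRCA}(i,j))$, which is \eqref{eq:bm-cov}. The model for the observed tips is then $\bar Y_{s,j}+\eta_{s,j}$, with $\eta_{s,j}\sim\mathcal N(0,\lambda)$ i.i.d. and independent of the BM. Therefore
\[
(\bar Y_{u,j},\bar Y_{\mathcal L,j})\sim\mathcal N\!\left(0,\begin{pmatrix} C_{uu} & c_u^\top\\ c_u & \Sigma_{\mathcal{LL}}\end{pmatrix}\right).
\]
Here the nugget enters only the tip block, since the noise is independent of the internal node.

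\emph{Step 2: factorization across coordinates.} The $K$ coordinates are driven by independent Wiener processes, and the nugget noise is independent across coordinates. So the joint law over all $(u,\mathcal L)$ traits is a product over $j$. The posterior of $\bar Y_{u,\cdot}$ given $\bar Y_{\mathcal L,\cdot}$ therefore factorizes, and each factor depends only on $\bar Y_{\mathcal L,j}$.

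\emph{Step 3: conditioning.} Since $\lambda>0$ and $C_{\mathcal{LL}}\succeq 0$ (it is a covariance matrix), $\Sigma_{\mathcal{LL}}\succ 0$ is invertible. The usual Gaussian conditioning formula then gives mean $c_u^\top\Sigma_{\mathcal{LL}}^{-1}\bar Y_{\mathcal L,j}$ and variance $C_{uu}-c_u^\top\Sigma_{\mathcal{LL}}^{-1}c_u$. For a self-contained derivation, set $R=\bar Y_{u,j}-c_u^\top\Sigma_{\mathcal{LL}}^{-1}\bar Y_{\mathcal L,j}$. One checks that $\mathrm{Cov}(R,\bar Y_{\mathcal L,j})=0$, so $R$ is independent of the tips by joint Gaussianity, and computing $\mathrm{Var}(R)$ gives the stated variance.

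\emph{Main obstacle.} The only delicate point is modeling. The standardization $\bar y_s$ uses data-dependent $\hat\mu_y,\hat\sigma_y$, so the statement should be read as conditional on treating the standardized tips as draws from the zero-mean unit-rate model. I will state this assumption explicitly rather than try to justify it. I will also note that the nugget is interpreted as observation noise on the tips and not as additional variance at $u$. This is exactly why $C_{uu}$, and not $C_{uu}+\lambda$, appears in the posterior variance.
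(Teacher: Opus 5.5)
Your proposal is correct and follows the paper's proof essentially verbatim: write the joint Gaussian law of $(\bar{Y}_{u,j}, \bar{Y}_{\mathcal{L},j})$ with the nugget entering only the tip block, apply the Schur-complement conditioning formula, and obtain the factorization from independence across coordinates. Your extra details fill in points the paper leaves implicit: deriving $C_{ij}$ from the shared root-to-MRCA path, noting $\Sigma_{\mathcal{LL}} \succ 0$ because $\lambda > 0$, and flagging the data-dependent standardization as a modeling assumption.
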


Because trait standardization maps all coordinates to unit variance, the posterior variance $\sigma_u^2 = C_{uu} - c_u^\top \Sigma_{\mathcal{LL}}^{-1} c_u \ge 0$ is a scalar shared across all $K$ coordinates. Write $m_u \in \R^K$ for the posterior mean in standardized space. Denote the unstandardized posterior mean $\hat{y}_u = \hat{\mu}_y + \hat{\sigma}_y \odot m_u$. We draw $B$ posterior samples with a temperature parameter $\tau > 0$ controlling diversity:
\begin{equation}\label{eq:bm-sample}
  y_u^{(b)} = \hat{y}_u + \tau\,\sigma_u\,(\hat{\sigma}_y \odot \epsilon_1), \qquad \epsilon_1 \sim \mathcal{N}(0, I_K),\quad b = 1,\ldots,B.
\end{equation}
Setting $\tau < 1$ concentrates samples around the posterior mean. Because the BM covariance is computed at the raw tree-depth scale while the traits are standardized to unit variance, $\sigma_u^2$ absorbs an implicit diffusion rate. The parameter $\tau$ therefore acts as a joint temperature and rate-calibration factor rather than a pure posterior scaling, and is tuned on downstream quality. The element-wise rescaling by $\hat{\sigma}_y$ maps the noise back to the original trait scale.

\smallskip\noindent\textbf{Residual perturbation.}\quad
Aggregation and posterior averaging suppress within-species variation. To restore clip-level diversity, we perturb each posterior sample with calibrated noise:
\begin{equation}\label{eq:resid}
  \tilde{y}_u^{(b)} = y_u^{(b)} + \gamma\, \hat{\sigma}_{\mathrm{res}} \odot \epsilon_2, \qquad \epsilon_2 \sim \mathcal{N}(0, I_K),
\end{equation}
where $\hat{\sigma}_{\mathrm{res}} \in \R^K$ is the pooled within-species residual standard deviation and $\gamma \ge 0$ controls perturbation strength. The two noise sources are complementary: $\tau$ scales node-dependent phylogenetic uncertainty via $\sigma_u^2$, while $\gamma$ restores within-species acoustic variability uniformly. This preserves the BM posterior mean (Proposition~\ref{prop:resid}, Appendix~\ref{app:resid_prop}).

\subsection{Anchored Inverse Lift}\label{sec:lift}

Since $K \ll D$, the trait map $\phi_W : \R^D \to \R^K$ is many-to-one. Each target trait $y \in \R^K$ determines a $(D{-}K)$-dimensional affine preimage
\begin{equation}\label{eq:preimage}
  \mathcal{A}(y) = \bigl\{x \in \R^D : Wx = y\bigr\} = x_p + \ker(W),
\end{equation}
where $x_p = W^\top y$ is a particular solution (using $WW^\top = I_K$). Recovering a unique decodable latent from $y$ is therefore an underdetermined linear inverse problem. Na\"ive approaches, such as the minimum-norm solution $x_p = W^\top y$, produce latents with zero nullspace component, which lie far from the data manifold and yield poor audio quality. We instead resolve the non-identifiability by \emph{anchoring} the inverse with a nearby observed latent.

\smallskip\noindent\textbf{Anchor selection.}\quad
Let $\mathcal{X}_{\mathrm{train}} = \{x_{s,i}\}_{s \in \mathcal{L},\, i \in [n_s]}$ denote the full set of encoded clips across all species. Given a target (perturbed) ancestral trait $\tilde{y}_u^{(b)}$, we select the training clip whose trait embedding is nearest in $\ell_2$ distance:
\begin{equation}\label{eq:anchor}
  x_0 = \arg\min_{x \in \mathcal{X}_{\mathrm{train}}} \|\phi_W(x) - \tilde{y}_u^{(b)}\|_2.
\end{equation}
This anchor provides a realistic nullspace component: the acoustic texture of a real clip that is phylogenetically close to the target ancestor.

\smallskip\noindent\textbf{Constrained projection.}\quad
We seek the latent $x^\star$ closest to the anchor $x_0$ subject to the trait constraint:
\begin{equation}\label{eq:lift-opt}
  x^\star = \arg\min_{x \in \R^D} \|x - x_0\|_2^2 \qquad \text{subject to} \qquad Wx = \tilde{y}_u^{(b)}.
\end{equation}
This is a linearly constrained least-squares problem with a closed-form solution.

\begin{lemma}[Optimal anchored lift]\label{lem:lift}
Let $W \in \R^{K \times D}$ satisfy $WW^\top = I_K$. For any target trait $y \in \R^K$ and anchor $x_0 \in \R^D$, problem~\eqref{eq:lift-opt} has the unique solution (proof in Appendix~\ref{app:lift_proofs}):
\begin{equation}\label{eq:lift-closed}
  x^\star = x_0 + W^\top\!\bigl(y - Wx_0\bigr).
\end{equation}
\end{lemma}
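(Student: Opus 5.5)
The plan is to verify feasibility directly and then prove optimality and uniqueness through an orthogonal decomposition. The decomposition $\R^D = \mathrm{row}(W) \oplus \ker(W)$ is orthogonal, and since $WW^\top = I_K$, the matrix $P = W^\top W$ is the orthogonal projector onto $\mathrm{row}(W)$.

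\emph{Feasibility.} Set $x^\star = x_0 + W^\top(y - Wx_0)$. Using $WW^\top = I_K$,
\[
Wx^\star = Wx_0 + WW^\top(y - Wx_0) = Wx_0 + y - Wx_0 = y,
\]
so $x^\star \in \mathcal{A}(y)$. By~\eqref{eq:preimage}, every feasible point then has the form $x = x^\star + h$ with $h \in \ker(W)$.

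\emph{Optimality and uniqueness.} The displacement $x^\star - x_0 = W^\top(y - Wx_0)$ lies in $\mathrm{row}(W)$. For any $h \in \ker(W)$,
\[
\langle W^\top(y - Wx_0), h\rangle = \langle y - Wx_0, Wh\rangle = 0 .
\]
Pythagoras therefore gives
\[
\|x - x_0\|_2^2 = \|x^\star - x_0\|_2^2 + \|h\|_2^2 \ge \|x^\star - x_0\|_2^2,
\]
with equality if and only if $h = 0$. This shows that $x^\star$ is the unique minimizer. As a cross-check, the same formula follows from the Lagrangian: the stationarity condition $x - x_0 = W^\top\mu$ together with the constraint gives $\mu = (WW^\top)^{-1}(y - Wx_0) = y - Wx_0$. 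Strict convexity of the objective over the affine feasible set also yields uniqueness.

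There is no real obstacle. The one point needing care is that the formula relies on exact row-orthonormality. Under the Cholesky reparametrization with $\varepsilon > 0$, Proposition~\ref{prop:orth} guarantees only approximate orthonormality. The lemma is stated under the hypothesis $WW^\top = I_K$, so this issue lies outside the proof. If needed, I would remark that for general full-rank $W$ the solution becomes $x_0 + W^\top(WW^\top)^{-1}(y - Wx_0)$.
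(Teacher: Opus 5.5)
Your proof is correct, but your main argument differs from the paper's. The paper uses the Lagrangian $\tfrac12\|x-x_0\|_2^2 + \lambda^\top(y - Wx)$. Stationarity forces $x = x_0 + W^\top\lambda$, and the constraint together with $WW^\top = I_K$ then forces $\lambda = y - Wx_0$. Uniqueness is attributed to strict convexity over the affine set $\mathcal{A}(y)$. This is exactly what you relegate to a cross-check.

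Your primary route instead verifies feasibility of the given formula. You then use the orthogonality of $\mathrm{row}(W)$ and $\ker(W)$ to get the exact identity $\|x - x_0\|_2^2 = \|x^\star - x_0\|_2^2 + \|h\|_2^2$ for every feasible $x = x^\star + h$ with $h \in \ker(W)$. This settles existence, optimality and uniqueness at once and needs no optimization theory.

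The paper's derivation, by contrast, tacitly invokes standard facts to pass from ``stationary'' to ``optimal.'' These are either existence of a minimizer plus necessity of the Lagrange conditions, or sufficiency of stationarity for a convex, linearly constrained problem. Your identity also makes Corollary~\ref{cor:null} immediate, since the correction $x^\star - x_0$ visibly lies in $\mathrm{row}(W)$.

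The Lagrangian route's advantage is that it is constructive: it derives the formula rather than checking a supplied candidate. That makes it the natural way to discover the general full-row-rank expression $x_0 + W^\top(WW^\top)^{-1}(y - Wx_0)$ you mention, though your orthogonality argument verifies that expression equally well.
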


\begin{corollary}[Exact trait preservation]\label{cor:trait}
The lifted latent satisfies the target trait exactly: $Wx^\star = y$.
\end{corollary}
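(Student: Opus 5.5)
The plan is to substitute the closed form~\eqref{eq:lift-closed} from Lemma~\ref{lem:lift} directly into $Wx^\star$ and simplify using only the row-orthonormality $WW^\top = I_K$, which is a standing hypothesis of that lemma. Concretely, left-multiplying $x^\star = x_0 + W^\top(y - Wx_0)$ by $W$ and distributing gives
\begin{equation*}
  Wx^\star = Wx_0 + (WW^\top)(y - Wx_0) = Wx_0 + y - Wx_0 = y .
\end{equation*}
No further step is needed, since the anchor terms cancel identically for every choice of $x_0 \in \R^D$.

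An even shorter route avoids the closed form altogether. Lemma~\ref{lem:lift} asserts that $x^\star$ solves the constrained problem~\eqref{eq:lift-opt}, so $x^\star$ is in particular feasible and satisfies the constraint $Wx^\star = \tilde{y}_u^{(b)}$ by definition. I would present the algebraic verification as the main argument, because it also serves as a sanity check that the formula in Lemma~\ref{lem:lift} is feasible. I would mention the feasibility observation as a remark.

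The only point requiring care, and the closest thing to an obstacle, is the exactness of $WW^\top = I_K$. In practice $W$ is produced by the Cholesky reparametrization~\eqref{eq:chol} with $\varepsilon > 0$, and Proposition~\ref{prop:orth} only guarantees $\|WW^\top - I_K\|_{\mathrm{op}} \le \varepsilon/\sigma_{\min}^2(A)$. I would therefore state the corollary under the lemma's exact hypothesis. I would then add a one-line perturbation bound for the practical setting: writing $WW^\top = I_K + \Delta$, the same computation gives $Wx^\star - y = \Delta(y - Wx_0)$, so that
\begin{equation*}
  \|Wx^\star - y\|_2 \le \frac{\varepsilon}{\sigma_{\min}^2(A)}\,\|y - Wx_0\|_2 .
\end{equation*}
This error vanishes as $\varepsilon \to 0$, so the trait is preserved exactly in the limit and up to $O(\varepsilon)$ for the implemented $W$.
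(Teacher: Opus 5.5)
Your proof is correct and is the paper's argument: the appendix verifies the corollary by the same direct computation, $Wx^\star = Wx_0 + WW^\top(y - Wx_0) = y$, using $WW^\top = I_K$. Your perturbation bound for $\varepsilon > 0$ also matches a remark the paper makes after the proof of Proposition~\ref{prop:orth}, where the trait-preservation error is bounded by $\|WW^\top - I_K\|_{\mathrm{op}}\,\|y - Wx_0\|_2$.
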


\begin{corollary}[Nullspace preservation]\label{cor:null}
The correction $x^\star - x_0$ lies entirely in $\mathrm{row}(W)$, so the nullspace component is preserved exactly:
$P_{\ker(W)}\,x^\star = P_{\ker(W)}\,x_0$ where $P_{\ker(W)} = I_D - W^\top W$.
\end{corollary}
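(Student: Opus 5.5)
The plan is to read Corollary~\ref{cor:null} directly off the closed form in Lemma~\ref{lem:lift}. By~\eqref{eq:lift-closed}, the correction is
\[
x^\star - x_0 = W^\top (y - Wx_0).
\]
This is of the form $W^\top v$ with $v = y - Wx_0 \in \R^K$. Hence it lies in the column space of $W^\top$, which is $\mathrm{row}(W)$. That gives the first claim.

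For the second claim, I will first confirm that $P_{\ker(W)} = I_D - W^\top W$ really is the orthogonal projector onto $\ker(W)$. This is the only place the hypothesis $WW^\top = I_K$ from Lemma~\ref{lem:lift} is used.
\begin{itemize}
\item \emph{Symmetry} is immediate.
\item \emph{Idempotence:} $(W^\top W)^2 = W^\top (WW^\top) W = W^\top W$, so $I_D - W^\top W$ is idempotent as well.
\item \emph{Range equals $\ker(W)$:} we have $W(I_D - W^\top W) = W - (WW^\top)W = 0$, so the range lies in $\ker(W)$. Conversely, any $x \in \ker(W)$ satisfies $(I_D - W^\top W)x = x$.
\end{itemize}

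The key identity is then that $P_{\ker(W)}$ annihilates $\mathrm{row}(W)$:
\[
(I_D - W^\top W)W^\top v = W^\top v - W^\top (WW^\top) v = 0 .
\]
Applying $P_{\ker(W)}$ to $x^\star = x_0 + W^\top(y - Wx_0)$ and using linearity therefore gives $P_{\ker(W)}x^\star = P_{\ker(W)}x_0$.

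There is no real obstacle here. The only point needing care is to state explicitly that row-orthonormality is what makes $I_D - W^\top W$ an orthogonal projector. Without $WW^\top = I_K$, the general projector would be $I_D - W^\top(WW^\top)^{-1}W$.

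If $\varepsilon > 0$ in~\eqref{eq:chol}, the identity holds only approximately, with error controlled by Proposition~\ref{prop:orth}. I would note this in a remark but not pursue it, since the corollary is stated under exact orthonormality.
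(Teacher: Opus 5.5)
Your proof is correct and follows the paper's argument. Both write $x^\star - x_0 = W^\top(y - Wx_0) \in \mathrm{row}(W)$ and conclude that $P_{\ker(W)}$ annihilates this correction; your explicit checks that $I_D - W^\top W$ is the orthogonal projector onto $\ker(W)$ and that $(I_D - W^\top W)W^\top = 0$ just spell out, via $WW^\top = I_K$, the step the paper takes for granted when it invokes $\mathrm{row}(W) = \ker(W)^\perp$.
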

The row-orthonormality $WW^\top = I_K$ induces the decomposition $\R^D = \mathrm{row}(W) \oplus \ker(W)$. The anchored lift replaces the \emph{phylogenetic subspace} $\mathrm{row}(W)$ with the BM-inferred target trait while preserving the \emph{acoustic nullspace} $\ker(W)$, which encodes fine-grained spectro-temporal texture from a nearby real clip. This decomposition is by design: the $K$ trait dimensions are precisely those learned to maximize phylogenetic distance correlation (Eq.~\ref{eq:trait-obj}), so they concentrate the evolutionarily informative variation. The remaining $D{-}K$ nullspace dimensions encode acoustic texture, such as timbre, spectral detail, and temporal micro-structure, that is not expected to carry phylogenetic signal and would be underdetermined by the tree alone. Inheriting this component from a nearby extant recording is analogous to ancestral sequence reconstruction, where informative sites are inferred from the phylogeny while neutral positions are filled from the closest reference. Proofs are in Appendix~\ref{app:lift_proofs}.

\smallskip\noindent\textbf{Decoding.}\quad
The lifted latent $x^\star$ is decoded through the pretrained VAE decoder and neural vocoder. For an internal node $u$, the complete ancestral sound generator is
\begin{equation}\label{eq:generator}
  \hat{a}_u^{(b)} = V\!\bigl(D_{\mathrm{vae}}\!\bigl(\Psi(\tilde{y}_u^{(b)})\bigr)\bigr), \qquad b = 1,\ldots,B,
\end{equation}
where $\Psi$ denotes the deterministic anchored-lift operator, comprising anchor selection~\eqref{eq:anchor} and constrained projection~\eqref{eq:lift-closed}. Since $\Psi$, $D_{\mathrm{vae}}$, and $V$ are all deterministic, all stochasticity in the generated output enters through the upstream BM posterior sampling~\eqref{eq:bm-sample} and the residual perturbation~\eqref{eq:resid}. The full pipeline is summarized in Algorithm~\ref{alg:tree_metric_asr} (Appendix~\ref{app:algorithm}).

\section{Experiments}\label{sec:exp}

\noindent \textbf{Datasets.} We evaluate on two phylogenetically and geographically distant bird clades:
\begin{itemize}[nosep,leftmargin=13pt]
  \item \textbf{Dataset~1 (Family Tyrannidae):} 21 New World tyrant flycatcher species spanning 6 genera, namely 10 \textit{Empidonax} spp., 3 \textit{Contopus} spp., 3 \textit{Sayornis} spp., 2 \textit{Tyrannus} spp., 2 \textit{Myiarchus} spp., and 1 \textit{Myiodynastes} sp., with a consensus phylogenetic tree~\citep{jetz2012global} containing 20 internal nodes (Figure~\ref{fig:phylo_tree}). Each species has 47--68 vocalization clips, totaling ${\sim}1{,}100$ clips. Tyrannidae are suboscines whose vocalizations are innate rather than culturally learned~\citep{kroodsma1984songs}, and thus are expected to carry strong phylogenetic signal~\citep{arato2021phylogenetic}.
  \item \textbf{Dataset~2 (Family Paridae):} 19 tit, chickadee, and titmouse species spanning 6 genera, namely 9 \textit{Poecile} spp., 5 \textit{Baeolophus} spp., 2 \textit{Cyanistes} spp., 1 \textit{Parus} sp., 1 \textit{Periparus} sp., and 1 \textit{Lophophanes} sp., with a consensus phylogenetic tree~\citep{jetz2012global} containing 18 internal nodes (Figure~\ref{fig:phylo_tree_xc}). Each species has 50--80 clips, totaling ${\sim}1{,}150$ clips. Paridae are oscines whose songs include culturally learned components~\citep{catchpole2008bird}, which allows us to test whether BM-based ancestral inference remains effective when the phylogenetic-distance assumption is weakened by cultural transmission.
\end{itemize}
All recordings are sourced from Xeno-Canto\footnote{\url{https://xeno-canto.org}} and standardized to 5-second mono 16\,kHz WAV.
The two clades differ in geographic origin, vocal structure, and phylogenetic placement. The same pretrained VAE and BigVGAN vocoder are used for both datasets without retraining; only the tree-metric trait projection $W$ and baseline models are fitted per dataset. Full dataset details, tree visualizations, and preprocessing steps are in Appendices~\ref{app:dataset1} and~\ref{app:dataset2}; details on the pretrained models are provided in Appendix~\ref{app:pretrained}.

\begin{table}[tb!]
\caption{\textbf{Phylogenetic plausibility of reconstructed internal-node sounds.} Methods are either \emph{generative} that produce novel audio: Raw Latent BM, Ours or \emph{retrieval-based} that output copies of training clips: BirdNET BM, Acoustic, PhyloNN, Poincar\'e. Unique Rate $< 1$ indicates retrieval collapse. \textit{Higher is better} for all metrics.}
\label{tab:main_phylo_results}
\centering
\small
\renewcommand{\arraystretch}{1.06}
\resizebox{0.99\linewidth}{!}{
\begin{tabular}{@{}l|ccccc@{}}
\toprule
Method & Tree--Emb.\ Corr.\ $\uparrow$ & Desc.\ Affinity Margin $\uparrow$ & Edge Mono.\ $\uparrow$ & NN-Train Dist.\ $\uparrow$ & Unique Rate $\uparrow$ \\
\midrule
\multicolumn{6}{c}{\textit{Dataset 1: \textrm{Tyrannidae} (21 species)}} \\
\midrule
\multicolumn{6}{l}{\footnotesize\textit{Retrieval-based}} \\
BirdNET Embedding + BM + Decode & 0.4677 & 0.0189 & 0.7333 & 0.2147 & \textbf{1.0000} \\
Classical Acoustic Features & 0.3449 & 0.0289 & 0.8000 & 0.1619 & 0.7500 \\
PhyloNN & 0.2970 & \textbf{0.0845} & 0.8667 & 0.1713 & 0.7000 \\
Poincar\'e & 0.2100 & 0.0063 & 0.9333 & 0.1662 & 0.2500 \\
\midrule
\multicolumn{6}{l}{\footnotesize\textit{Generative}} \\
Raw Latent BM & 0.3205 & 0.0035 & \textbf{1.0000} & \textbf{0.3413} & \textbf{1.0000} \\
\rowcolor{cyan!5}\textbf{Full Method (ours)} & \textbf{0.4859} & 0.0417 & 0.8667 & 0.2350 & \textbf{1.0000} \\
\midrule
\multicolumn{6}{c}{\textit{Dataset 2: \textrm{Paridae} (19 species)}} \\
\midrule
\multicolumn{6}{l}{\footnotesize\textit{Retrieval-based}} \\
BirdNET Embedding + BM + Decode & $-$0.2262 & $-$0.0006 & 0.5833 & 0.1992 & \textbf{1.0000} \\
Classical Acoustic Features & $-$0.3168 & $-$0.0010 & 0.5000 & 0.1726 & 0.5556 \\
PhyloNN & $-$0.2033 & \textbf{0.0491} & 0.7500 & 0.1788 & 0.5556 \\
Poincar\'e & $-$0.0014 & 0.0114 & 0.5833 & 0.1733 & 0.6111 \\
\midrule
\multicolumn{6}{l}{\footnotesize\textit{Generative }} \\
Raw Latent BM & \textbf{0.1848} & 0.0005 & \textbf{0.9167} & \textbf{0.2089} & \textbf{1.0000} \\
\rowcolor{cyan!5}\textbf{Full Method (ours)} & 0.1237 & 0.0023 & \textbf{0.9167} & 0.1742 & \textbf{1.0000} \\
\bottomrule
\end{tabular}
}
\vspace{-15pt}
\end{table}

\noindent \textbf{Baselines.} We compare five baselines spanning two paradigms: \emph{generative} methods decode novel latent vectors to audio, while \emph{retrieval-based} methods output the nearest training clip because their feature spaces are not decodable.
\textbf{Raw Latent BM} (\emph{generative}): BM directly in VAE latent space ($D{=}3{,}120$) without trait projection.
\textbf{BirdNET BM} (\emph{retrieval}): BM in pretrained BirdNET embeddings~\citep{kahl2021birdnet}.
\textbf{Classical Acoustic} (\emph{retrieval}): BM on handcrafted acoustic descriptors.
\textbf{PhyloNN} (\emph{retrieval}): a neural representation trained with phylogenetic supervision~\citep{elhamod2023phylonn}.
\textbf{Poincar\'e} (\emph{retrieval}): hyperbolic embeddings of species relationships~\citep{carvallo2025siamese}.
All retrieval baselines use nearest-exemplar decoding after BM inference, providing an upper bound on retrieval-based performance. Only Raw Latent BM and our method generate novel audio. All methods share the same tree, the same aggregation procedure applied in each method's own embedding space, and the same VAE--vocoder rendering pipeline (Appendix~\ref{app:exp_details}).

\noindent \textbf{Model configuration.} Our full model learns a $K{=}20$ tree-metric trait projection (sensitivity analysis in Appendix~\ref{app:k_sensitivity}) with trimmed aggregation at $\alpha{=}0.3$, BM nugget $10^{-3}$, $B{=}3$ posterior samples at temperature 0.6, residual perturbation $\gamma{=}0.25$, and nearest-trait anchored lifting. All latents are decoded with the same VAE decoder and BigVGAN vocoder. Full details in Appendix~\ref{app:exp_details}.

\noindent \textbf{Evaluation.} Since no ground-truth ancestral audio exists, we evaluate along two axes. \emph{Phylogenetic plausibility} is measured by Tree--Embedding Correlation, Descendant Affinity Margin, Edge Monotonicity, Nearest-Train Distance, and Unique Rate. \emph{Audio quality} is measured by Cycle Similarity, Fr\'echet Audio Distance~\citep{kilgour2019frechet} in BirdNET/PANNs/CLAP embedding spaces, and Silence Rate. Definitions are in Appendix~\ref{app:eval_metrics}.

\section{Results}

\noindent \textbf{Generation vs.\ retrieval.} The central distinction in Table~\ref{tab:main_phylo_results} is between methods that \emph{generate} novel audio and those that \emph{retrieve} existing recordings. Three of the four retrieval baselines collapse to Unique Rates of 0.25--0.75 on Dataset~1 and 0.56--0.61 on Dataset~2, meaning a substantial fraction of their outputs are exact copies of one another. Even BirdNET BM, which retrieves distinct clips for each node, still outputs copies of training recordings rather than novel audio. Our method achieves 100\% Unique Rate on both datasets while generating entirely novel waveforms.

\noindent \textbf{Phylogenetic plausibility within generative methods.} Among generative methods, ours achieves substantially higher TreeCorr on Dataset~1, 0.49 vs.\ 0.32, and higher Descendant Affinity, 0.042 vs.\ 0.004. On Dataset~2, Raw Latent BM has slightly higher TreeCorr, 0.18 vs.\ 0.12, and both tie on Edge Monotonicity at 0.92; however, this comes at the cost of diffuse, noise-like spectrograms with substantially worse FAD scores as shown in Table~\ref{tab:audio_quality}. The lower absolute TreeCorr on Dataset~2 is consistent with culturally learned oscine songs in Paridae carrying weaker phylogenetic signal than innate suboscine vocalizations in Tyrannidae, and with BirdNET being trained predominantly on North American species. Note that BirdNET BM operates in only 1{,}024 dimensions yet performs poorly, especially on Dataset~2, indicating that the bottleneck is not dimensionality alone but the mismatch between a discriminative classifier embedding and the distance structure assumed by BM. On Dataset~2, all non-latent baselines produce \emph{negative} Tree--Embedding Correlations, indicating that only generative methods preserve phylogenetic structure on this clade. Alternative evolutionary models are compared in Appendix~\ref{app:evo_models}; BM provides the best overall balance.

\begin{table}[tb!]
\caption{\textbf{Audio quality of reconstructed ancestral sounds.} Retrieval-based methods achieve low FAD by construction since their outputs are copies of real clips. The meaningful comparison is between generative methods: Raw Latent BM produces degraded audio (high FAD), while our method achieves FAD comparable to retrieval baselines despite generating entirely novel audio.}
\label{tab:audio_quality}
\centering
\small
\renewcommand{\arraystretch}{1.06}
\resizebox{\linewidth}{!}{
\begin{tabular}{@{}ll|ccccc|ccccc@{}}
\toprule
& & \multicolumn{5}{c|}{\textit{Dataset 1: Tyrannidae}} & \multicolumn{5}{c}{\textit{Dataset 2: Paridae}} \\
& Method & CycleSim$\uparrow$ & FAD-BN$\downarrow$ & FAD-PANN$\downarrow$ & FAD-CLAP$\downarrow$ & Silence$\downarrow$
       & CycleSim$\uparrow$ & FAD-BN$\downarrow$ & FAD-PANN$\downarrow$ & FAD-CLAP$\downarrow$ & Silence$\downarrow$ \\
\midrule
\multirow{4}{*}{\rotatebox[origin=c]{90}{\footnotesize\textit{Retrieval}}}
  & BirdNET BM & 0.9985 & \textbf{0.4803} & 0.1681 & 0.6760 & 0.0476 & 0.9981 & \textbf{0.5041} & \textbf{0.1860} & \textbf{0.5317} & 0.0431 \\
  & Acoustic & \textbf{0.9987} & 0.6117 & \textbf{0.1182} & \textbf{0.5983} & 0.0154 & \textbf{0.9984} & 0.6653 & 0.1922 & 0.5739 & 0.0108 \\
  & PhyloNN & \textbf{0.9987} & 0.6165 & 0.1295 & 0.6800 & 0.1000 & \textbf{0.9984} & 0.6111 & 0.2871 & 0.6111 & 0.0615 \\
  & Poincar\'e & 0.9984 & 0.7591 & 0.3388 & 0.7823 & 0.0008 & 0.9979 & 0.5425 & 0.2045 & 0.5532 & 0.0288 \\
\midrule
\multirow{2}{*}{\rotatebox[origin=c]{90}{\footnotesize\textit{Gen.}}}
  & Raw Latent BM & 0.9915 & 0.8587 & 0.6785 & 1.6511 & \textbf{0.0003} & 0.9920 & 0.7575 & 0.6601 & 1.5330 & \textbf{0.0000} \\
  & \cellcolor{cyan!5}\textbf{Ours} & \cellcolor{cyan!5}0.9985 & \cellcolor{cyan!5}0.6326 & \cellcolor{cyan!5}0.1744 & \cellcolor{cyan!5}0.6711 & \cellcolor{cyan!5}0.0079 & \cellcolor{cyan!5}0.9983 & \cellcolor{cyan!5}0.6008 & \cellcolor{cyan!5}0.2028 & \cellcolor{cyan!5}0.5927 & \cellcolor{cyan!5}\textbf{0.0000} \\
\bottomrule
\end{tabular}
}
\vspace{-10pt}
\end{table}

\begin{table}[tb!]
\caption{\textbf{Ablation study on Dataset~1.} Each axis is ablated independently from a base configuration.
\textit{Trait space} and \textit{Lift} ablations use the base config with $\gamma{=}0$ to isolate each component's effect.
\textit{Residual} and \textit{Trim} ablations then vary $\gamma$ and $\alpha$ on top of the best trait and lift choices.
Full method uses $W$ + NN-trait + $\gamma{=}0.25$ + $\alpha{=}0.3$.}
\label{tab:ablation}
\centering
\small
\renewcommand{\arraystretch}{1.06}
\resizebox{\linewidth}{!}{
\begin{tabular}{@{}ll|ccccc|ccccc@{}}
\toprule
& & \multicolumn{5}{c|}{\textit{Phylogenetic Plausibility}} & \multicolumn{5}{c}{\textit{Audio Quality}} \\
Ablation axis & Config
  & TreeCorr $\uparrow$ & DescMarg $\uparrow$ & EdgeMono $\uparrow$
  & NNDist $\uparrow$ & Unique $\uparrow$
  & CycleSim $\uparrow$ & FAD-BN $\downarrow$ & FAD-PANN $\downarrow$
  & FAD-CLAP $\downarrow$ & Silence $\downarrow$ \\
\midrule
\multirow{3}{*}{Trait space}
  & PCA-64            & 0.2450 & 0.0003 & 0.9333 & 0.3717 & \textbf{1.0000} & 0.9933 & 0.9374 & 0.7307 & 1.5836 & 0.0005 \\
  & Random orth.\ $\mathbf{R}$ & 0.1811 & 0.0036 & \textbf{1.0000} & \textbf{0.4413} & \textbf{1.0000} & 0.9936 & 1.0118 & 0.7613 & 1.9633 & \textbf{0.0000} \\
  & Learned $W$ (ours) & \textbf{0.3403} & \textbf{0.0449} & 0.8667 & 0.1877 & \textbf{1.0000} & \textbf{0.9985} & \textbf{0.6584} & \textbf{0.1745} & \textbf{0.6348} & 0.0040 \\
\midrule
\multirow{3}{*}{Lift}
  & Mean              & 0.2230 & $-$0.0005 & \textbf{0.9333} & \textbf{0.3648} & \textbf{1.0000} & 0.9935 & 0.9384 & 0.7283 & 1.5923 & \textbf{0.0000} \\
  & NN-species        & 0.3094 & 0.0079 & \textbf{0.9333} & 0.3541 & \textbf{1.0000} & 0.9934 & 0.8675 & 0.6624 & 1.6282 & 0.3063 \\
  & NN-trait (ours)   & \textbf{0.3403} & \textbf{0.0449} & 0.8667 & 0.1877 & \textbf{1.0000} & \textbf{0.9985} & \textbf{0.6584} & \textbf{0.1745} & \textbf{0.6348} & 0.0040 \\
\midrule
\multirow{3}{*}{Residual $\gamma$}
  & $\gamma = 0$      & 0.3403 & \textbf{0.0449} & \textbf{0.8667} & 0.1877 & \textbf{1.0000} & \textbf{0.9985} & 0.6584 & 0.1745 & \textbf{0.6348} & \textbf{0.0040} \\
  & $\gamma = 0.25$ (ours) & 0.4859 & 0.0417 & \textbf{0.8667} & 0.2350 & \textbf{1.0000} & \textbf{0.9985} & 0.6326 & 0.1744 & 0.6711 & 0.0079 \\
  & $\gamma = 0.5$    & \textbf{0.4943} & 0.0367 & 0.8000 & \textbf{0.2437} & \textbf{1.0000} & \textbf{0.9985} & \textbf{0.5531} & \textbf{0.1576} & 0.6588 & 0.0129 \\
\midrule
\multirow{2}{*}{Trim $\alpha$}
  & $\alpha = 0$        & 0.4810 & 0.0385 & 0.8000 & \textbf{0.2366} & \textbf{1.0000} & \textbf{0.9985} & 0.6343 & \textbf{0.1738} & 0.6833 & \textbf{0.0078} \\
  & $\alpha = 0.3$ (ours) & \textbf{0.4859} & \textbf{0.0417} & \textbf{0.8667} & 0.2350 & \textbf{1.0000} & \textbf{0.9985} & \textbf{0.6326} & 0.1744 & \textbf{0.6711} & 0.0079 \\
\midrule
\rowcolor{cyan!5}\multicolumn{2}{l|}{\textbf{Full method (ours)}}
  & \textbf{0.4859} & \textbf{0.0417} & \textbf{0.8667} & \textbf{0.2350} & \textbf{1.0000} & \textbf{0.9985} & \textbf{0.6326} & \textbf{0.1744} & \textbf{0.6711} & \textbf{0.0079} \\
\bottomrule
\end{tabular}
}
\vspace{-20pt}
\end{table}

\noindent \textbf{Audio quality.}
Table~\ref{tab:audio_quality} reports perceptual quality metrics. Retrieval baselines achieve low FAD by construction since their outputs are verbatim copies of real recordings. Among generative methods, Raw Latent BM produces severely degraded audio with FAD-BN of 0.86 and 0.76 and FAD-CLAP of 1.65 and 1.53 on Datasets~1 and~2, while our method achieves FAD-BN of 0.63 and 0.60 and FAD-CLAP of 0.67 and 0.59, comparable to retrieval baselines despite generating completely novel waveforms that do not exist in the training set.

\noindent \textbf{Ablation study.}
Table~\ref{tab:ablation} ablates each component on Dataset~1. The learned projection $W$ is the most impactful choice, reducing FAD-BN from ${\sim}1.0$ under PCA or random projection to $0.66$. Nearest-trait anchoring is critical: global-mean lift yields FAD-BN $0.94$ and nearest-species anchoring produces 31\% silence. Residual perturbation at $\gamma{=}0.25$ improves TreeCorr from $0.34$ to $0.49$ with negligible quality loss, and trimming at $\alpha{=}0.3$ improves Edge Monotonicity; trait dimension sensitivity is in Appendix~\ref{app:k_sensitivity} and trait space interpretability in Appendix~\ref{app:trait_interpret}.

\begin{figure}[tb!]
\centering
\includegraphics[width=0.98\linewidth]{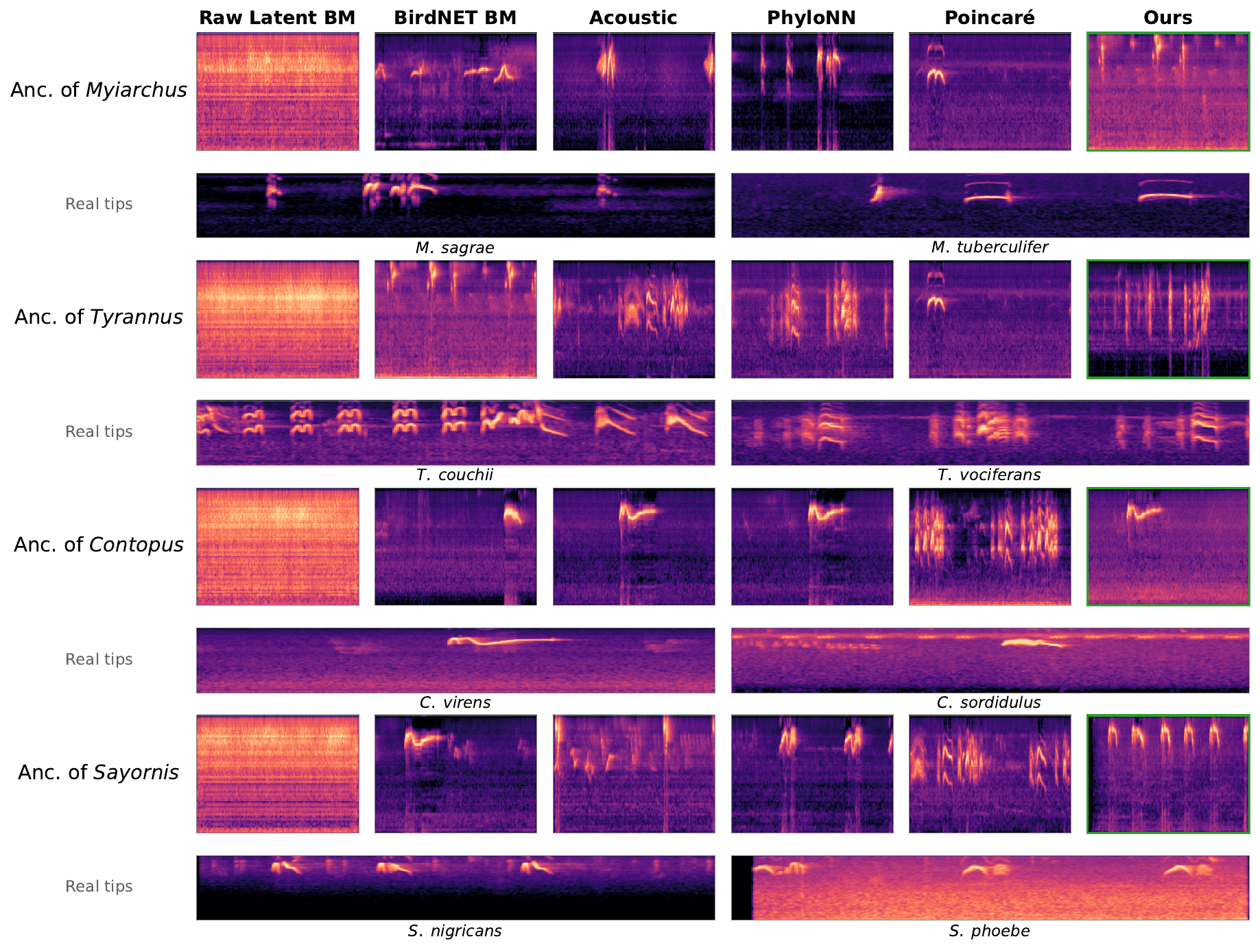}
\vspace{-5pt}
\caption{\textbf{Qualitative comparison of ancestral sound reconstructions.}
Each block shows mel spectrograms generated by six methods for one ancestral node, labeled by genus, with real recordings of the two extant descendant species below for reference.
Retrieval-based baselines such as Acoustic, PhyloNN, Poincar\'e produce copies of training clips and cannot generate novel ancestral sounds.
Our method, highlighted in green, generates novel spectrograms with harmonic and temporal structure that interpolates between the descendant tips.}
\vspace{-18pt}
\label{fig:qualitative_main}
\end{figure}

\noindent \textbf{Qualitative comparison.}
Figure~\ref{fig:qualitative_main} illustrates the generation--retrieval divide. Raw Latent BM produces diffuse, noise-like spectrograms; retrieval baselines produce sharp but duplicated copies of training clips. Our method generates novel spectrograms with clear harmonic structure that inherits recognizable spectro-temporal patterns from descendant species while interpolating between them. The reconstructed MRCA of \textit{T.\ couchii} and \textit{T.\ vociferans} produces a spectrogram whose frequency range and temporal patterning lie between the two descendants rather than copying either one. The reconstructed ancestor of the cryptic \textit{E.\ traillii} and \textit{E.\ alnorum} pair in Figure~\ref{fig:qualitative_appendix_2}, species split taxonomically based on vocal differences, generates an intermediate vocalization blending elements of both descendant calls, suggesting the framework captures the acoustic divergence axis. Extended results are in Appendix~\ref{app:qualitative}.

\section{Conclusion}

We presented the first framework that genuinely generates plausible ancestral vocalizations on a phylogeny, moving beyond retrieval. A tree-metric linear projection provides a compact subspace where Brownian motion matches evolutionary geometry and the anchored inverse lift fills the acoustic nullspace from nearby observed latents with a closed-form solution. Unlike retrieval-based approaches and na\"ive latent-space BM, our method generates unique, naturalistic audio for every ancestral node, with consistent results across two phylogenetically distant clades spanning the suboscine--oscine divide. Beyond methodology, the framework opens avenues for evolutionary biology: testing hypotheses about vocal divergence rates, examining whether reconstructed ancestors of cryptic species pairs exhibit intermediate acoustic properties, and reconstructing vocalizations for recently extinct species. Limitations are discussed in Appendix~\ref{app:limitations}.

\newpage

\bibliographystyle{unsrtnat}
\bibliography{references}

\newpage
\appendix

\section{Limitations}\label{app:limitations}

(1)~\textit{No ground truth.} Ancestral vocalizations are fundamentally unobservable, so evaluation relies on indirect plausibility metrics in BirdNET embedding space; perceptual listening studies could provide complementary validation.
(2)~\textit{Anchor dependence.} The nullspace component of each ancestral latent is inherited from a single anchor clip via Corollary~\ref{cor:null}, so fine-grained acoustic texture is copied from the nearest extant recording rather than inferred from the phylogeny.
(3)~\textit{Scale.} Both datasets contain ${\sim}$20 species; validation on larger phylogenies such as full Passeriformes with more species remains open.

\section{Trait Dimension Sensitivity}\label{app:k_sensitivity}

Table~\ref{tab:k_sensitivity} reports phylogenetic plausibility metrics as a function of the trait dimension~$K$ on Dataset~1.
Descendant Affinity Margin increases with $K$, peaking at $K{=}16$, indicating that higher-dimensional projections better separate descendant from non-descendant affinities.
Tree--Embedding Correlation is highest at $K{=}4$--$8$ and decreases for larger $K$, reflecting a bias--variance trade-off: more trait dimensions allow finer-grained distance matching but also admit more noise from the optimization landscape.
At $K{=}21 = n$, the maximum rank, Tree--Embedding Correlation drops sharply to 0.392, consistent with overfitting when no degrees of freedom remain for regularization.
We select $K{=}20$ as it provides near-maximal expressiveness while retaining regularization capacity, achieving strong Edge Monotonicity of 0.867 and balanced performance across all metrics.
While $K{=}8$ yields the highest TreeCorr of 0.563, its Edge Monotonicity drops to 0.533, the lowest among all settings, indicating that local parent--child acoustic ordering breaks down. Conversely, $K{=}4$ achieves high Edge Monotonicity of 0.933 but retains too few dimensions to capture fine-grained phylogenetic variation, as reflected in its lower DescMarg. $K{=}20$ balances these trade-offs: it maintains strong Edge Monotonicity of 0.867 while encoding enough phylogenetic information for robust ancestral inference across all metrics.

\begin{table}[htb!]
\centering
\caption{\textbf{Trait dimension $K$ sensitivity on Dataset~1.}
All metrics use the full pipeline with learned $W \in \R^{K \times 3120}$.
$\uparrow$: higher is better; $\downarrow$: lower is better. Best per column in \textbf{bold}.}
\label{tab:k_sensitivity}
\small
\begin{tabular}{r ccccc}
\toprule
$K$ & TreeCorr$\uparrow$ & DescMarg$\uparrow$ & EdgeMono$\uparrow$ & NNDist$\uparrow$ & Unique$\uparrow$ \\
\midrule
2   & 0.518          & 0.034          & 0.800          & 0.222          & 1.000 \\
4   & 0.556          & 0.040          & \textbf{0.933} & 0.226          & 1.000 \\
8   & \textbf{0.563} & 0.046          & 0.533          & 0.227          & 1.000 \\
16  & 0.498          & \textbf{0.053} & 0.800          & 0.224          & 1.000 \\
\textbf{20} & 0.486   & 0.042          & 0.867          & 0.235          & 1.000 \\
21  & 0.392          & 0.046          & \textbf{0.933} & \textbf{0.245} & 1.000 \\
\bottomrule
\end{tabular}
\end{table}

\section{Trait Space Interpretability}\label{app:trait_interpret}

\begin{figure}[htb!]
\centering
\includegraphics[width=\linewidth]{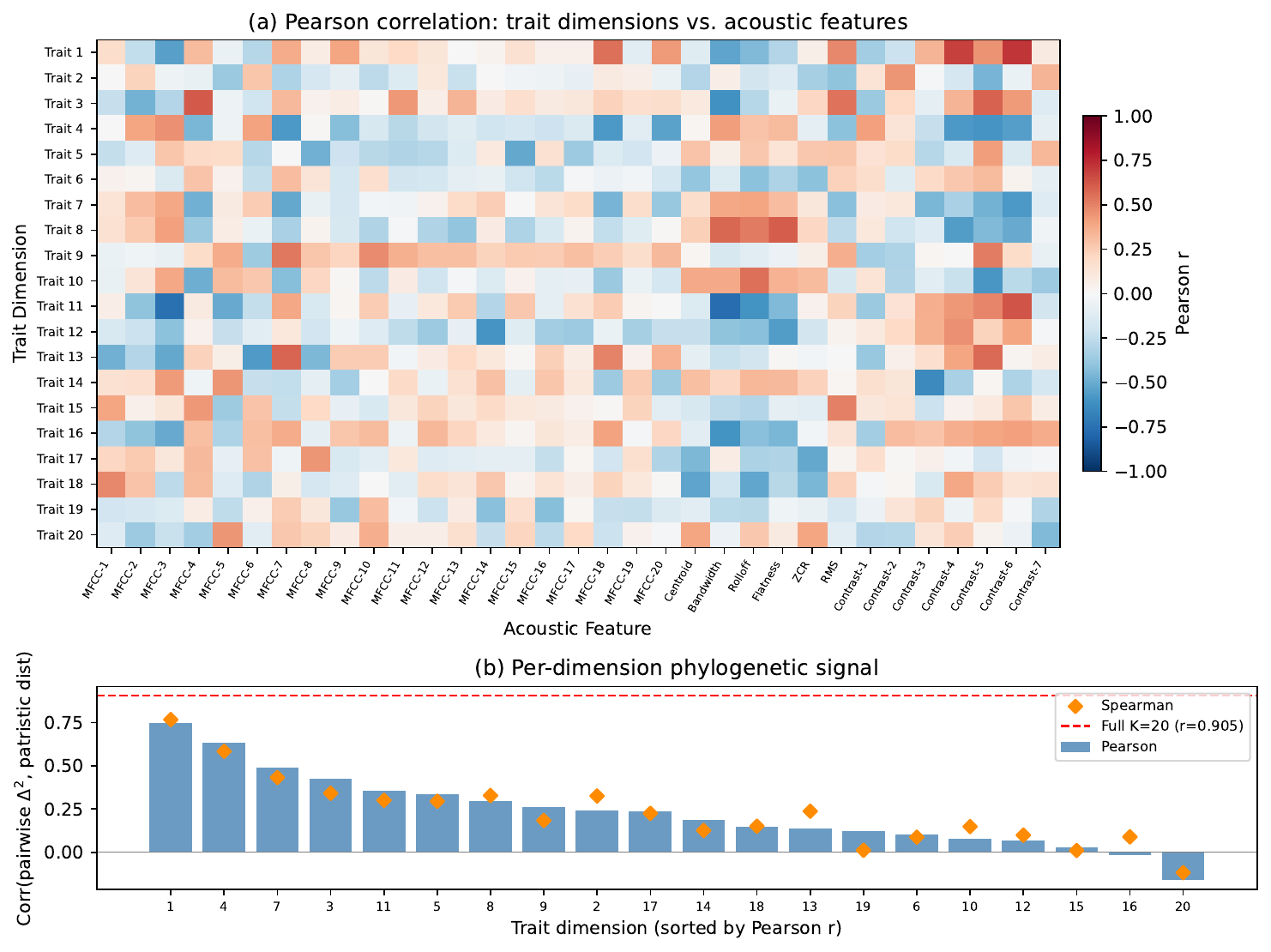}
\caption{\textbf{Trait space interpretability on Dataset~1.}
\textbf{(a)}~Pearson correlation between each of the $K{=}20$ trait dimensions and 33 acoustic features, computed at the species level. Different dimensions capture distinct acoustic properties: spectral contrast, bandwidth, flatness, and mel-frequency cepstral coefficients (MFCCs), which summarize the spectral envelope of a sound.
\textbf{(b)}~Per-dimension phylogenetic signal, measured as the correlation between single-dimension pairwise squared distances and patristic distances. Dimensions are sorted by decreasing signal strength; the dashed line shows the full 20-dimensional correlation ($r{=}0.905$).}
\label{fig:trait_interpret}
\end{figure}

Figure~\ref{fig:trait_interpret} analyzes what the $K{=}20$ learned trait dimensions capture on Dataset~1.

\smallskip\noindent\textbf{Acoustic feature correspondence} (Figure~\ref{fig:trait_interpret}a).\quad
We compute species-level Pearson correlations between each trait dimension and 33 classical acoustic descriptors: 20 MFCCs, spectral centroid, bandwidth, rolloff, flatness, zero-crossing rate, RMS energy, and 7-band spectral contrast.
Different trait dimensions systematically correspond to distinct acoustic properties.
Dim~11 correlates most strongly with spectral bandwidth at $r{=}{-}0.76$, capturing vocal frequency spread.
Dim~1 aligns with high-frequency spectral contrast at $r{=}{+}0.71$, reflecting harmonic structure.
Dim~14 captures mid-frequency spectral contrast at $r{=}{-}0.64$, and Dim~8 tracks spectral flatness at $r{=}{+}0.60$, distinguishing tonal from noisy vocalizations.
The learned trait space thus recovers known acoustic dimensions relevant to comparative bioacoustics~\citep{derryberry2018ecological} without being explicitly trained on them.

\smallskip\noindent\textbf{Per-dimension phylogenetic signal} (Figure~\ref{fig:trait_interpret}b).\quad
We measure each trait dimension's individual phylogenetic signal as the Pearson correlation between its pairwise squared distances and patristic distances.
The dimensions exhibit a clear hierarchy: Dim~1 alone achieves $r{=}0.75$, capturing the majority of the full 20-dimensional signal at $r{=}0.905$.
The top four dimensions---1, 4, 7, and 3---account for most of the phylogenetic structure, while 15 of 20 dimensions maintain positive signal above $r{=}0.1$, confirming that the dimensions are complementary rather than redundant.
Notably, the dimensions with the strongest phylogenetic signal are those most correlated with spectral contrast features, consistent with findings that harmonic structure carries phylogenetic information in bird vocalizations~\citep{arato2021phylogenetic}.

\smallskip\noindent\textbf{Subspace decomposition validation} (Figure~\ref{fig:subspace_decomp}).\quad
Section~\ref{sec:lift} assumes that $\mathrm{row}(W)$ concentrates phylogenetically informative variation while $\ker(W)$ encodes non-phylogenetic acoustic texture.
We verify this empirically by projecting species-mean latents onto each subspace and correlating pairwise squared distances with patristic distances.
The $K{=}20$-dimensional $\mathrm{row}(W)$ achieves $r{=}0.905$, while the remaining $D{-}K{=}3{,}100$-dimensional $\ker(W)$ achieves only $r{=}0.208$.
Normalized by dimensionality, the per-dimension phylogenetic signal density in $\mathrm{row}(W)$ is approximately 600 times that of $\ker(W)$, confirming that the learned projection successfully separates phylogenetic structure from acoustic texture in the VAE latent space.

\begin{figure}[htb!]
\centering
\includegraphics[width=\linewidth]{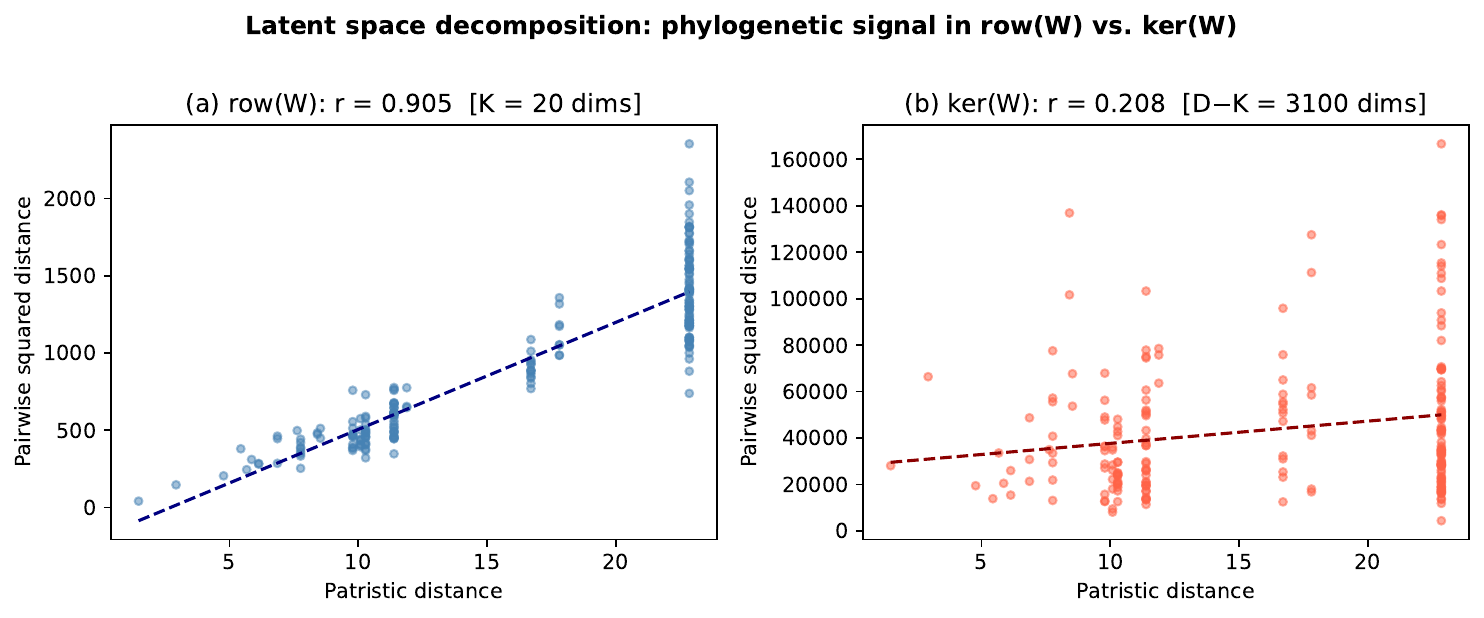}
\caption{\textbf{Subspace decomposition validation on Dataset~1.}
\textbf{(a)}~Pairwise squared distances in $\mathrm{row}(W)$ vs.\ patristic distances show a strong linear relationship at $r{=}0.905$, confirming that the learned trait subspace captures phylogenetic structure.
\textbf{(b)}~The same analysis in $\ker(W)$ yields only $r{=}0.208$ despite spanning 3{,}100 dimensions, indicating that the nullspace primarily encodes non-phylogenetic acoustic variation.}
\label{fig:subspace_decomp}
\end{figure}

\section{Residual Perturbation Properties}\label{app:resid_prop}

The pooled residual standard deviation is computed as $\hat{\sigma}_{\mathrm{res},k} = \bigl(\frac{1}{N}\sum_{s \in \mathcal{L}}\sum_{i=1}^{n_s}(z_{s,i,k} - y_{s,k})^2\bigr)^{1/2}$ with $N = \sum_s n_s$.

\begin{proposition}[Mean preservation and variance characterization]\label{prop:resid}
The perturbation~\eqref{eq:resid} satisfies:
\begin{enumerate}[nosep,leftmargin=15pt]
  \item $\E[\tilde{y}_u^{(b)} \mid y_u^{(b)}] = y_u^{(b)}$\quad (the BM posterior mean is preserved);
  \item $\mathrm{Cov}(\tilde{y}_u^{(b)} \mid y_u^{(b)}) = \gamma^2 \operatorname{diag}(\hat{\sigma}_{\mathrm{res}}^{\,2})$\quad (independent, axis-aligned noise);
  \item The marginal variance of $\tilde{y}_u^{(b)}$ integrating over both BM sampling and perturbation is $\tau^2 \sigma_u^2 \hat{\sigma}_{y,k}^2 + \gamma^2 \hat{\sigma}_{\mathrm{res},k}^2$ along coordinate $k$.
\end{enumerate}
\end{proposition}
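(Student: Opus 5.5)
The plan is a direct computation from the definition~\eqref{eq:resid}, $\tilde{y}_u^{(b)} = y_u^{(b)} + \gamma\,\hat{\sigma}_{\mathrm{res}}\odot\epsilon_2$. The one modelling point to fix at the outset is which quantities are random. We treat $\hat{\mu}_y$, $\hat{\sigma}_y$, $\hat{\sigma}_{\mathrm{res}}$, $m_u$ and $\sigma_u$ as fixed. They are deterministic functions of the observed tip traits and the tree, computed before sampling. The only randomness then comes from $\epsilon_1$ in~\eqref{eq:bm-sample} and $\epsilon_2$ in~\eqref{eq:resid}. We also assume $\epsilon_2$ is drawn independently of $\epsilon_1$, which is how the sampling procedure is defined.

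For parts 1 and 2 we condition on $y_u^{(b)}$, which is a function of $\epsilon_1$ alone. By independence, $\epsilon_2$ is still $\mathcal{N}(0,I_K)$ given $y_u^{(b)}$. So, conditionally, $\tilde{y}_u^{(b)}$ is a fixed vector plus the linear image of a standard Gaussian under $\operatorname{diag}(\gamma\hat{\sigma}_{\mathrm{res}})$. Linearity of expectation gives conditional mean $y_u^{(b)}$, which is part 1. The covariance is $\operatorname{diag}(\gamma\hat{\sigma}_{\mathrm{res}})\, I_K\, \operatorname{diag}(\gamma\hat{\sigma}_{\mathrm{res}}) = \gamma^2\operatorname{diag}(\hat{\sigma}_{\mathrm{res}}^{\,2})$, which is part 2. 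To justify the phrase ``BM posterior mean is preserved'', we would also use the tower property: $\E[\tilde{y}_u^{(b)}] = \E[y_u^{(b)}] = \hat{y}_u$, since $\E[\epsilon_1]=0$.

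For part 3 we use the law of total variance coordinatewise, $\mathrm{Var}(\tilde{y}_{u,k}) = \E[\mathrm{Var}(\tilde{y}_{u,k}\mid y_u)] + \mathrm{Var}(\E[\tilde{y}_{u,k}\mid y_u])$. By part 2 the first term is $\gamma^2\hat{\sigma}_{\mathrm{res},k}^2$. By part 1 the second term is $\mathrm{Var}(y_{u,k}^{(b)})$, and from~\eqref{eq:bm-sample} this equals $\tau^2\sigma_u^2\hat{\sigma}_{y,k}^2$. Equivalently, $\tilde{y}_{u,k} - \hat{y}_{u,k} = \tau\sigma_u\hat{\sigma}_{y,k}\epsilon_{1,k} + \gamma\hat{\sigma}_{\mathrm{res},k}\epsilon_{2,k}$ is a sum of two independent centered Gaussians, so the variances add. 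As a byproduct, the joint covariance is diagonal.

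None of these steps is technically hard. The main obstacle is stating the conditioning convention cleanly. ``Marginal'' here means conditional on the data and on the fitted statistics; it is not a marginal over the BM prior, which would bring $C_{uu}$ back in. The proof also relies on independence of $\epsilon_1$ and $\epsilon_2$. I would state both explicitly as the setting of the proposition before carrying out the computation.
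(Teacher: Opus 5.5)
Your proposal is correct and follows essentially the same route as the paper: parts 1 and 2 follow from $\E[\epsilon_2]=0$ and $\mathrm{Cov}(\epsilon_2)=I_K$, and part 3 follows from the law of total variance applied coordinatewise. Your explicit assumptions (that $\epsilon_1$ and $\epsilon_2$ are independent, and that the fitted statistics are held fixed) state what the paper's short proof leaves implicit, and your tower-property remark usefully justifies the phrase ``posterior mean is preserved.''
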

\begin{proof}
Parts 1 and 2 follow from $\E[\epsilon] = 0$ and $\mathrm{Cov}(\epsilon) = I_K$. For part (3), by the law of total variance,
\[
\mathrm{Var}(\tilde{y}_{u,k}^{(b)}) = \E[\mathrm{Var}(\tilde{y}_{u,k}^{(b)} \mid y_u^{(b)})] + \mathrm{Var}(\E[\tilde{y}_{u,k}^{(b)} \mid y_u^{(b)}]) = \gamma^2 \hat{\sigma}_{\mathrm{res},k}^2 + \tau^2 \sigma_u^2 \hat{\sigma}_{y,k}^2. \qedhere
\]
\end{proof}

\section{Proofs and Theoretical Details}\label{app:lift_proofs}

\subsection{Proof of Proposition~\ref{prop:orth} (Exact Row-Orthonormality)}

By construction $LL^\top = AA^\top + \varepsilon I_K$, so $AA^\top = LL^\top - \varepsilon I_K$. Substituting into $WW^\top$:
\[
\begin{aligned}
WW^\top
  &= L^{-1}A A^\top L^{-\top}
   = L^{-1}(LL^\top - \varepsilon I_K)L^{-\top} \\
  &= L^{-1}LL^\top L^{-\top}
   - \varepsilon\,L^{-1}L^{-\top}
   = I_K - \varepsilon\,L^{-1}L^{-\top},
\end{aligned}
\]
where $L^{-1}L = I_K$ and $L^\top L^{-\top} = I_K$ give the first term. As $\varepsilon \to 0$, $WW^\top \to I_K$.

\smallskip\noindent\textbf{Operator norm bound.}\quad
For the quantitative bound, we relate $L^{-1}L^{-\top}$ to the spectrum of $AA^\top + \varepsilon I_K$.
Since $L$ is square and invertible, $L^{-1}L^{-\top} = (L^\top L)^{-1}$.
For any invertible square matrix $M$, the matrices $M^\top M$ and $MM^\top$ share the same eigenvalues (both equal the squared singular values of $M$), so
\[
\|L^{-1}L^{-\top}\|_{\mathrm{op}} = \|(L^\top L)^{-1}\|_{\mathrm{op}} = \|(LL^\top)^{-1}\|_{\mathrm{op}} = \frac{1}{\lambda_{\min}(LL^\top)} = \frac{1}{\lambda_{\min}(AA^\top + \varepsilon I_K)}.
\]
Since the eigenvalues of $AA^\top$ are the squared singular values of $A$, we have $\lambda_{\min}(AA^\top + \varepsilon I_K) = \sigma_{\min}^2(A) + \varepsilon$, yielding
\[
\|WW^\top - I_K\|_{\mathrm{op}} = \varepsilon\,\|L^{-1}L^{-\top}\|_{\mathrm{op}} = \frac{\varepsilon}{\sigma_{\min}^2(A) + \varepsilon} \le \frac{\varepsilon}{\sigma_{\min}^2(A)},
\]
where the last step uses $\sigma_{\min}^2(A) + \varepsilon > \sigma_{\min}^2(A)$. This yields the bound stated in Proposition~\ref{prop:orth}. In practice we use $\varepsilon = 10^{-6}$ and observe $\sigma_{\min}(A) \approx 0.1$, giving a deviation of approximately $10^{-4}$, which is negligible compared to the scale of BM posterior uncertainty. This also bounds the trait-preservation error of the anchored lift (Lemma~\ref{lem:lift}): $\|Wx^\star - y\|_2 \le \|WW^\top - I_K\|_{\mathrm{op}} \cdot \|y - Wx_0\|_2 \le 10^{-4}\|y - Wx_0\|_2$.

The Cholesky reparametrization has three advantages over direct Stiefel projection: (i)~it is differentiable everywhere (assuming full rank), enabling standard gradient-based optimization; (ii)~the Cholesky decomposition is numerically stable for $K \ll D$ (our setting: $K{=}20$, $D{=}3{,}120$); (iii)~it avoids the retraction step required by Riemannian optimizers on the Stiefel manifold, which can be expensive for tall-thin matrices.

\subsection{Proof of Proposition~\ref{prop:bm} (BM Conditional Posterior)}

The joint distribution of $[\bar{Y}_u; \bar{Y}_{\mathcal{L}}]$ under BM is Gaussian with block covariance
$\bigl[\begin{smallmatrix} C_{uu} & c_u^\top \\ c_u & \Sigma_{\mathcal{LL}} \end{smallmatrix}\bigr]$.
By the standard Gaussian conditioning formula, the conditional mean is $c_u^\top \Sigma_{\mathcal{LL}}^{-1} \bar{Y}_{\mathcal{L},j}$ and the conditional variance is $C_{uu} - c_u^\top \Sigma_{\mathcal{LL}}^{-1} c_u$. The nugget enters only through $\Sigma_{\mathcal{LL}}$ (it models observation noise at the tips), not through $c_u$ or $C_{uu}$. Coordinate-wise factorization follows from the i.i.d.-across-dimensions BM assumption.

In implementation, we solve the linear system $\Sigma_{\mathcal{LL}} \alpha = \bar{Y}_{\mathcal{L},j}$ via LU decomposition rather than explicitly computing $\Sigma_{\mathcal{LL}}^{-1}$, and clamp the posterior variance to $\max(\sigma_u^2, 0)$ to handle numerical errors near the root.

\subsection{Proof of Lemma~\ref{lem:lift} (Optimal Anchored Lift)}

Introduce Lagrange multipliers $\lambda \in \R^K$ and form the Lagrangian
\[
  \mathcal{L}(x, \lambda) = \tfrac{1}{2}\|x - x_0\|_2^2 + \lambda^\top\!\bigl(y - Wx\bigr).
\]
The first-order stationarity condition $\nabla_{x} \mathcal{L} = 0$ gives $x - x_0 = W^\top \lambda$, i.e.\ $x = x_0 + W^\top \lambda$. Substituting into the constraint $Wx = y$:
\[
  Wx_0 + WW^\top \lambda = y \implies \lambda = (WW^\top)^{-1}(y - Wx_0) = y - Wx_0,
\]
where the last equality uses $WW^\top = I_K$. Hence $x^\star = x_0 + W^\top(y - Wx_0)$. Uniqueness follows from strict convexity of $\|\cdot\|_2^2$ over the affine constraint set $\mathcal{A}(y)$.

\subsection{Proofs of Corollaries~\ref{cor:trait} and~\ref{cor:null}}

\noindent\textbf{Corollary~\ref{cor:trait}} (Exact trait preservation).
Direct computation: $Wx^\star = Wx_0 + WW^\top(y - Wx_0) = Wx_0 + I_K(y - Wx_0) = y$.

\noindent\textbf{Corollary~\ref{cor:null}} (Nullspace preservation).
From Lemma~\ref{lem:lift}, $x^\star - x_0 = W^\top v$ for $v = y - Wx_0 \in \R^K$. Since $W^\top v \in \mathrm{col}(W^\top) = \mathrm{row}(W) = \ker(W)^\perp$, we have $P_{\ker(W)}(x^\star - x_0) = 0$.

\subsection{Orthogonal Decomposition of the Latent Space}

The row-orthonormality $WW^\top = I_K$ induces the orthogonal decomposition $\R^D = \mathrm{row}(W) \oplus \ker(W)$, where $\dim(\mathrm{row}(W)) = K$ and $\dim(\ker(W)) = D - K$. The orthogonal projectors onto these subspaces are $P_{\mathrm{row}} = W^\top W$ and $P_{\ker} = I_D - W^\top W$, respectively. By Corollaries~\ref{cor:trait} and~\ref{cor:null}, the anchored lift operator acts as follows: (i)~it replaces the row-space component of $x_0$ with the target trait, the \emph{phylogenetic subspace} controlled by the BM posterior; (ii)~it preserves the nullspace component of $x_0$ exactly, the \emph{acoustic nullspace}, which encodes fine-grained spectro-temporal texture inherited from a nearby real clip.

This decomposition has a useful consequence for generation diversity. Let $x_0, x_0'$ be two different anchor clips with the same trait, i.e.\ $\phi_W(x_0) = \phi_W(x_0')$. Then the lifted latents $x^\star, x^{\star\prime}$ have the same row-space component but differ in their nullspace components: $P_{\ker}x^\star = P_{\ker}x_0 \ne P_{\ker}x_0' = P_{\ker}x^{\star\prime}$. This means different anchor selections produce distinct ancestral audio with the same phylogenetic content but different acoustic textures.

\subsection{Invariance of the Anchored Lift under Orthogonal Rotation}\label{app:rotation_inv}

As noted in Section~\ref{sec:trait}, $W$ is unique only up to left-multiplication by an orthogonal $Q \in \R^{K \times K}$. We verify that the full anchored lift, including anchor selection, is invariant to this ambiguity. Replacing $W$ with $\tilde{W} = QW$ gives traits $\tilde{y} = QWx = Qy$. Anchor selection~\eqref{eq:anchor} is invariant because $\|\tilde{W}x - \tilde{y}\|_2 = \|Q(Wx - y)\|_2 = \|Wx - y\|_2$ for any orthogonal $Q$, so the same anchor $x_0$ is selected under both $W$ and $\tilde{W}$. The lift then becomes
\[
  \tilde{x}^\star = x_0 + \tilde{W}^\top(\tilde{y} - \tilde{W}x_0) = x_0 + W^\top Q^\top Q(y - Wx_0) = x_0 + W^\top(y - Wx_0) = x^\star,
\]
where we used $Q^\top Q = I_K$. Hence the deterministic pipeline, comprising the posterior mean, anchored lift, and decoding, produces identical audio regardless of which orthogonal representative of $W$ is found by optimization. The stochastic sampling steps~\eqref{eq:bm-sample}--\eqref{eq:resid} are not strictly invariant because the axis-aligned noise scaling by $\hat{\sigma}_y$ and $\hat{\sigma}_{\mathrm{res}}$ does not commute with rotation, as $\mathrm{diag}(Q\Sigma Q^\top) \neq Q\,\mathrm{diag}(\Sigma)\,Q^\top$ in general. In practice this effect is negligible because the posterior mean dominates the sample, $\tau < 1$ attenuates phylogenetic noise, and $\gamma$ is small.

\section{Full Algorithm}\label{app:algorithm}

\begin{algorithm}[h]
\caption{Ancestral Sound Reconstruction via Tree-Metric Latent Traits}
\label{alg:tree_metric_asr}
\begin{algorithmic}[1]
\STATE \textbf{Input:} Tree $T = (\mathcal{V}, \mathcal{E})$; clip sets $\{\mathcal{A}_s\}_{s\in\mathcal{L}}$; pretrained $(E_{\mathrm{vae}}, D_{\mathrm{vae}}, V)$; hyperparameters $(K, \alpha, \lambda, B, \tau, \gamma)$.
\STATE \textbf{Output:} Ancestral waveforms $\{\hat{a}_u^{(b)}\}_{u\in\mathcal{U},\,b=1}^B$.

\STATE Encode clips: $x_{s,i} \leftarrow \mathrm{vec}(E_{\mathrm{vae}}(a_{s,i})) \in \R^D$ for all $s \in \mathcal{L},\, i = 1,\ldots,n_s$.
\STATE Learn $W \in \R^{K \times D}$ with $WW^\top = I_K$ via~\eqref{eq:trait-obj}--\eqref{eq:chol}.
\STATE Compute trimmed tip traits $\{y_s\}_{s \in \mathcal{L}}$ via~\eqref{eq:agg}; save empirical mean $\hat{\mu}_y$ and std.\ dev.\ $\hat{\sigma}_y$; standardize to zero mean, unit variance.
\STATE Build $\Sigma_{\mathcal{LL}} = C_{\mathcal{LL}} + \lambda I$; compute pooled within-species residual scale $\hat{\sigma}_{\mathrm{res}}$.
\FOR{each internal node $u \in \mathcal{U}$}
    \STATE Compute posterior mean $m_u$ and variance $\sigma_u^2$ via Proposition~\ref{prop:bm}.
    \FOR{$b = 1,\ldots, B$}
        \STATE Compute unstandardized mean $\hat{y}_u \leftarrow \hat{\mu}_y + \hat{\sigma}_y \odot m_u$; sample $y_u^{(b)} \leftarrow \hat{y}_u + \tau\,\sigma_u\,(\hat{\sigma}_y \odot \epsilon_1)$, $\epsilon_1 \sim \mathcal{N}(0, I_K)$.
        \STATE Perturb: $\tilde{y}_u^{(b)} \leftarrow y_u^{(b)} + \gamma\, \hat{\sigma}_{\mathrm{res}} \odot \epsilon_2$, \quad $\epsilon_2 \sim \mathcal{N}(0, I_K)$.
        \STATE Select anchor: $x_0 \leftarrow \arg\min_{x \in \mathcal{X}_{\mathrm{train}}} \|\phi_W(x) - \tilde{y}_u^{(b)}\|_2$.
        \STATE Lift: $x^\star \leftarrow x_0 + W^\top(\tilde{y}_u^{(b)} - Wx_0)$.
        \STATE Emit: $\hat{a}_u^{(b)} \leftarrow V\bigl(D_{\mathrm{vae}}(x^\star)\bigr)$.
    \ENDFOR
\ENDFOR
\end{algorithmic}
\end{algorithm}

Algorithm~\ref{alg:tree_metric_asr} summarizes the complete ancestral sound reconstruction pipeline described in Sections~\ref{sec:trait}--\ref{sec:lift}.

\section{Alternative Evolutionary Models}\label{app:evo_models}

In the main text we adopt standard Brownian motion (BM) as the evolutionary model for ancestral trait inference.
Here we describe two families of generalizations, the Ornstein--Uhlenbeck (OU) process and Pagel's branch-length transformations, and compare them empirically against BM.
All models share the same Gaussian conditioning machinery (Proposition~\ref{prop:bm}); they differ only in the tip-tip covariance $C_{\mathcal{LL}}$, the cross-covariance $c_u$, and the self-variance $C_{uu}$ used for each internal node $u$.

\subsection{Ornstein--Uhlenbeck Process}

The OU process~\citep{hansen1997stabilizing} adds a mean-reverting drift to Brownian motion.
Each trait coordinate evolves as
\begin{equation}\label{eq:ou-sde}
  dY_t = -\alpha\,(Y_t - \theta)\,dt + dW_t, \qquad \alpha > 0,
\end{equation}
where $\alpha$ controls the strength of stabilizing selection and $\theta$ is the long-run optimum, set to zero after standardization. The resulting covariance between tips $i,j$ separated by path lengths $s_i, s_j$ from their MRCA at depth $t_a$ is
\begin{equation}\label{eq:ou-cov}
  C_{ij}^{\mathrm{OU}} = \frac{1}{2\alpha}\bigl(1 - e^{-2\alpha\,t_a}\bigr)\,e^{-\alpha(s_i + s_j)},
\end{equation}
where $t_a = \mathrm{depth}(\mathrm{MRCA}(i,j))$ and $s_i, s_j$ are the branch lengths from the MRCA to tips $i, j$ respectively.
As $\alpha \to 0$ the OU covariance recovers BM; large $\alpha$ shrinks off-diagonal entries toward zero, effectively decorrelating distant relatives.

For the cross-covariance between internal node $u$, at depth $t_u$, and tip $i$, and the self-variance of $u$:
\begin{align}
  (c_u^{\mathrm{OU}})_i &= \frac{1}{2\alpha}\bigl(1 - e^{-2\alpha\,t_a}\bigr)\,e^{-\alpha\,(s_u + s_i)}, \label{eq:ou-cross}\\
  C_{uu}^{\mathrm{OU}} &= \frac{1}{2\alpha}\bigl(1 - e^{-2\alpha\,t_u}\bigr), \label{eq:ou-self}
\end{align}
where $t_a = \mathrm{depth}(\mathrm{MRCA}(u,i))$, $s_u = t_u - t_a$ is the branch length from the MRCA to node $u$, and $s_i$ is the branch length from the MRCA to tip $i$.
When $u$ is an ancestor of $i$, the MRCA is $u$ itself, so $s_u = 0$ and the expression reduces to $\frac{1}{2\alpha}(1 - e^{-2\alpha\,t_u})\,e^{-\alpha\,s_i}$.

\subsection{Pagel's \texorpdfstring{$\lambda$}{lambda} Transformation}

Pagel's $\lambda$~\citep{pagel1999inferring} rescales the off-diagonal elements of the BM covariance matrix to modulate phylogenetic signal strength:
\begin{equation}\label{eq:pagel-lam}
  C_{ij}^{\lambda} =
  \begin{cases}
    \lambda \cdot \mathrm{depth}\bigl(\mathrm{MRCA}(i,j)\bigr), & i \ne j,\\
    \mathrm{depth}(i), & i = j.
  \end{cases}
\end{equation}
When $\lambda = 1$ this recovers BM; $\lambda = 0$ produces a star phylogeny in which tips are independent given the root, discarding all shared evolutionary history.
Intermediate values interpolate between these extremes, allowing the model to down-weight phylogenetic covariance when the observed trait variation contains a substantial non-phylogenetic component.

The cross-covariance and self-variance for internal node $u$ are:
\begin{align}
  (c_u^{\lambda})_i &= \lambda \cdot \mathrm{depth}\bigl(\mathrm{MRCA}(u,i)\bigr), \label{eq:lam-cross}\\
  C_{uu}^{\lambda} &= \lambda \cdot \mathrm{depth}(u). \label{eq:lam-self}
\end{align}

\subsection{Pagel's \texorpdfstring{$\delta$}{delta} Transformation}

Pagel's $\delta$~\citep{pagel1999inferring} applies a power transformation to node depths, re-weighting early versus late evolution:
\begin{equation}\label{eq:pagel-del}
  C_{ij}^{\delta} = \bigl[\mathrm{depth}\bigl(\mathrm{MRCA}(i,j)\bigr)\bigr]^{\delta}.
\end{equation}
When $\delta = 1$ this recovers BM.
Values $\delta < 1$ stretch deep branches relative to shallow ones, allocating more variance to early divergence, consistent with early-burst models, while $\delta > 1$ compresses deep branches and stretches shallow ones, consistent with accelerating late evolution.

The cross-covariance and self-variance follow the same power transformation:
\begin{align}
  (c_u^{\delta})_i &= \bigl[\mathrm{depth}\bigl(\mathrm{MRCA}(u,i)\bigr)\bigr]^{\delta}, \label{eq:del-cross}\\
  C_{uu}^{\delta} &= \bigl[\mathrm{depth}(u)\bigr]^{\delta}. \label{eq:del-self}
\end{align}

\subsection{Empirical Comparison}

We evaluate 12 configurations on Dataset~1, Tyrannidae, 21 species, 20 internal nodes, spanning BM, OU with $\alpha \in \{0.1, 0.5, 1.0, 2.0\}$, Pagel's $\lambda$ with $\lambda \in \{0, 0.25, 0.5, 0.75\}$, and Pagel's $\delta$ with $\delta \in \{0.5, 1.5, 2.0\}$.
All configurations use the same learned tree-metric trait space, nearest-trait anchored lifting, and BigVGAN decoding pipeline as the full method in the main text; only the evolutionary model for ancestral inference differs.
Results are shown in Table~\ref{tab:evo_models}.

\begin{table}[htb!]
\caption{\textbf{Evolutionary model comparison.} All configurations use our learned tree-metric trait space with nearest-trait anchored lifting and mild residual perturbation ($\gamma=0.25$). Only the evolutionary model for ancestral inference varies. Best value per column in \textbf{bold}.}
\label{tab:evo_models}
\centering
\small
\renewcommand{\arraystretch}{1.06}
\resizebox{\linewidth}{!}{
\begin{tabular}{@{}ll|ccccc|ccccc@{}}
\toprule
& & \multicolumn{5}{c|}{\textit{Phylogenetic Plausibility}} & \multicolumn{5}{c}{\textit{Audio Quality}} \\
Model & Param & TreeCorr$\uparrow$ & DescMarg$\uparrow$ & EdgeMono$\uparrow$ & NNDist$\uparrow$ & Unique$\uparrow$ & CycleSim$\uparrow$ & FAD-BN$\downarrow$ & FAD-PANN$\downarrow$ & FAD-CLAP$\downarrow$ & Silence$\downarrow$ \\
\midrule
\rowcolor{cyan!5} BM & --- & 0.4859 & 0.0417 & 0.8667 & 0.2350 & 1.0000 & 0.9985 & 0.6326 & 0.1744 & 0.6710 & 0.0079 \\
\midrule
OU & $\alpha{=}0.1$ & 0.4983 & 0.0463 & \textbf{0.9333} & 0.2157 & 1.0000 & 0.9986 & 0.6097 & 0.1703 & 0.6627 & 0.0146 \\
OU & $\alpha{=}0.5$ & 0.1643 & 0.0044 & 0.8667 & 0.3916 & 1.0000 & 0.9981 & 0.9026 & 0.4534 & 1.3264 & 0.0261 \\
OU & $\alpha{=}1.0$ & 0.2443 & 0.0143 & 0.6667 & \textbf{0.4155} & 1.0000 & 0.9955 & 0.9332 & 0.6624 & 1.7026 & 0.0023 \\
OU & $\alpha{=}2.0$ & 0.1655 & 0.0088 & 0.8667 & 0.4102 & 1.0000 & 0.9911 & 0.9466 & 0.8766 & 1.8562 & \textbf{0.0002} \\
\midrule
Pagel $\lambda$ & $\lambda{=}0$ & 0.5226 & 0.0432 & 0.7333 & 0.2310 & 1.0000 & \textbf{0.9987} & 0.6454 & 0.1499 & 0.6332 & 0.0249 \\
Pagel $\lambda$ & $\lambda{=}0.25$ & \textbf{0.5730} & 0.0439 & 0.7333 & 0.2116 & 1.0000 & 0.9986 & \textbf{0.5729} & 0.1576 & \textbf{0.6339} & 0.0003 \\
Pagel $\lambda$ & $\lambda{=}0.5$ & 0.5150 & 0.0445 & 0.8000 & 0.2231 & 1.0000 & 0.9986 & 0.5990 & 0.1954 & 0.6872 & 0.0510 \\
Pagel $\lambda$ & $\lambda{=}0.75$ & 0.4705 & 0.0447 & 0.8000 & 0.2322 & 1.0000 & 0.9986 & 0.6294 & 0.1895 & 0.6903 & 0.0040 \\
\midrule
Pagel $\delta$ & $\delta{=}0.5$ & 0.3877 & \textbf{0.0812} & 0.6667 & 0.2088 & 1.0000 & 0.9986 & 0.5687 & \textbf{0.1433} & 0.6593 & 0.1133 \\
Pagel $\delta$ & $\delta{=}1.5$ & 0.4674 & 0.0160 & 0.6000 & 0.2014 & 1.0000 & 0.9985 & 0.6455 & 0.2156 & 0.6978 & 0.0159 \\
Pagel $\delta$ & $\delta{=}2.0$ & 0.4374 & 0.0158 & 0.6000 & 0.2044 & 1.0000 & 0.9985 & 0.6521 & 0.2154 & 0.6992 & 0.0154 \\
\bottomrule
\end{tabular}
}
\end{table}

\noindent\textbf{Discussion.}\quad
Pagel's $\lambda{=}0.25$ achieves the highest Tree--Embedding Correlation, 0.5730 vs.\ 0.4859 for BM, an 18\% improvement, and competitive audio quality with FAD-BN of 0.5729, indicating better global phylogenetic consistency.
However, BM retains stronger Edge Monotonicity, 0.8667 vs.\ 0.7333, suggesting that the BM posterior better preserves local parent--child ordering.
OU with mild selection at $\alpha{=}0.1$ performs comparably to BM with a slight edge in Tree Correlation and Edge Monotonicity, but stronger OU selection at $\alpha \geq 0.5$ degrades all metrics substantially as the mean-reverting drift overwhelms phylogenetic signal.
Pagel's $\delta{=}0.5$ achieves the best Descendant Affinity Margin of 0.0812, the lowest FAD-PANN of 0.1433, and the lowest FAD-BN of 0.5687, but at the cost of higher Silence fraction of 0.1133 and weaker Edge Monotonicity.

Overall, BM provides the best balance across all metrics and remains our default choice in the main text.
Pagel's $\lambda$ with mild phylogenetic signal reduction at $\lambda \approx 0.25$ is a promising alternative when global tree correlation is prioritized over local edge ordering.

\section{Experimental Details}\label{app:exp_details}

\begin{figure}[tb!]
\centering
\includegraphics[width=0.95\linewidth]{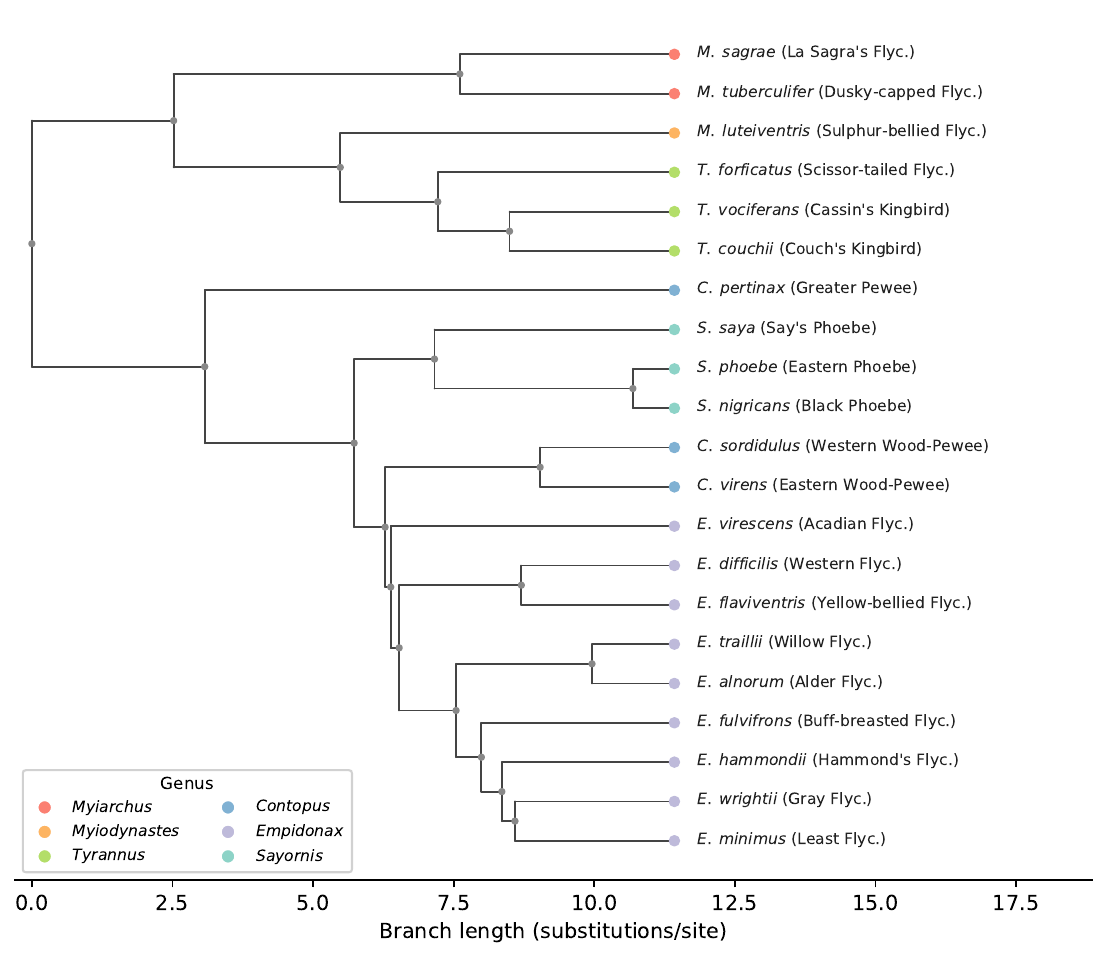}
\caption{\textbf{Phylogenetic tree of the tyrant flycatcher clade for Dataset~1.}
21 extant species from 6 genera, with 20 internal ancestral nodes, derived from BirdTree.org~\citep{jetz2012global}.
Branch lengths reflect evolutionary distance in substitutions per site.
Species are colored by genus.}
\label{fig:phylo_tree}
\end{figure}

\subsection{Dataset 1: Tyrannidae}\label{app:dataset1}

Our first dataset consists of vocalization recordings from a clade of New World tyrant flycatchers, family Tyrannidae, one of the most species-rich passerine radiations.
The 21 focal species were selected to maximize phylogenetic coverage across 6 genera, \textit{Empidonax}, \textit{Contopus}, \textit{Myiarchus}, \textit{Myiodynastes}, \textit{Sayornis}, and \textit{Tyrannus}, and to include only species with sufficient high-quality recordings available on Xeno-Canto with quality ratings A--B and a minimum duration of 5\,s.
The clade comprises 10 \textit{Empidonax} spp., 3 \textit{Contopus} spp., 3 \textit{Sayornis} spp., 2 \textit{Tyrannus} spp., 2 \textit{Myiarchus} spp., and 1 \textit{Myiodynastes} sp.
All audio files were manually reviewed and trimmed to standardized 5-second clips using Raven Pro v1.6, then converted to 16-bit signed 16\,kHz mono WAV format.
Each species is represented by 47--68 audio clips with a mean of 54.7 clips per species, for a total of approximately 1{,}100 clips.
For the phylogenetic framework, 10{,}000 posterior trees were downloaded from BirdTree.org~\citep{jetz2012global} using the Ericson Stage~2 backbone~\citep{ericson2006diversification}.
Species name concordance between the focal species list and the BirdTree taxonomy was verified before pruning.
All 10{,}000 trees were pruned to the focal species set using the R packages \texttt{ape}~\citep{paradis2004ape} and \texttt{phangorn}~\citep{schliep2011phangorn}, and a maximum clade credibility consensus tree was computed by selecting the tree with the highest product of posterior clade support values while retaining the original branch lengths.
The resulting tree is fully resolved with no polytomies, with 21 tip nodes and 20 internal ancestral nodes and is ultrametric after midpoint rooting, with a uniform root-to-tip depth of 11.42 substitutions/site.
Figure~\ref{fig:phylo_tree} shows the tree topology with species colored by genus.
The tree structure captures several well-resolved sister-species pairs, such as \textit{E.\ traillii}--\textit{E.\ alnorum} and \textit{S.\ phoebe}--\textit{S.\ saya}, that serve as natural test cases for ancestral reconstruction: the most recent common ancestor of a closely related pair should produce a vocalization intermediate between its two extant descendants.

\begin{figure}[tb!]
\centering
\includegraphics[width=0.95\linewidth]{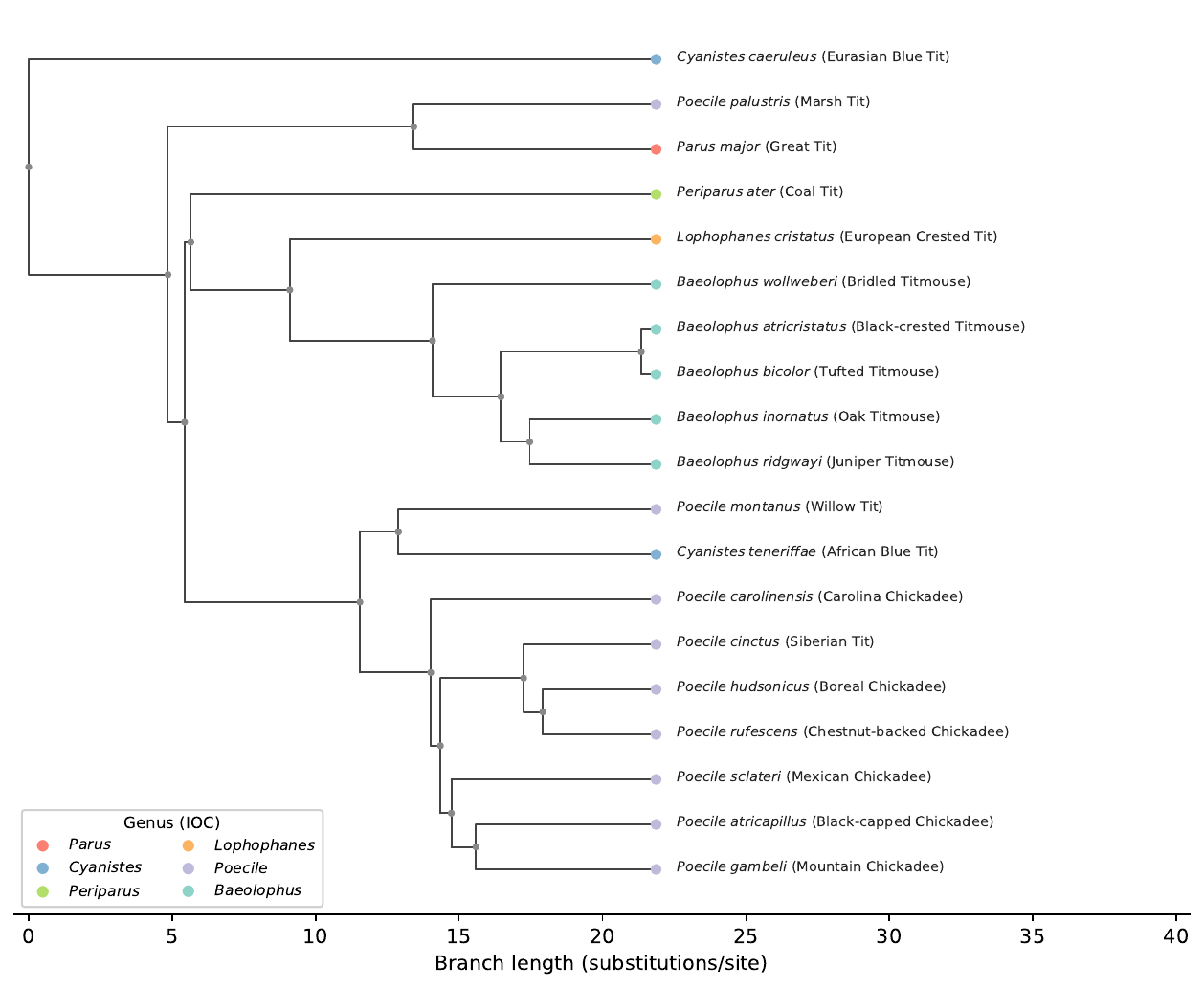}
\caption{\textbf{Phylogenetic tree of the Paridae clade for Dataset~2.}
19 extant species from 6 genera under IOC taxonomy, with 18 internal ancestral nodes, derived from BirdTree.org~\citep{jetz2012global}.
Branch lengths reflect evolutionary distance in substitutions per site.
Species are colored by IOC genus; the BirdTree taxonomy lumps \textit{Cyanistes}, \textit{Periparus}, \textit{Lophophanes}, and \textit{Poecile} under \textit{Parus}.}
\label{fig:phylo_tree_xc}
\end{figure}

\subsection{Dataset 2: Paridae}\label{app:dataset2}

Our second dataset tests cross-clade generalization using an independently curated set of tits, chickadees, and titmice, family Paridae.
The 19 focal species span 6 genera: 9 \textit{Poecile} spp., 5 \textit{Baeolophus} spp., 2 \textit{Cyanistes} spp., 1 \textit{Parus} sp., 1 \textit{Periparus} sp., and 1 \textit{Lophophanes} sp.
Recordings were downloaded programmatically from the Xeno-Canto API with quality ratings A--B, up to 80 per species.
Unlike Dataset~1, which was manually curated with Raven Pro, recordings shorter than 3\,s were excluded automatically and the loudest 5-second segment was extracted from each remaining file; recordings between 3--5\,s were zero-padded to the target length.
After filtering, 50--80 clips per species were retained, totaling approximately 1{,}150 clips.
Audio is standardized identically to Dataset~1: 16-bit signed 16\,kHz mono WAV, 5 seconds per clip.
For the phylogenetic framework, 1{,}000 posterior trees were downloaded from BirdTree.org~\citep{jetz2012global} using the Hackett All Species backbone~\citep{hackett2008phylogenomic}, which differs from the Ericson Stage~2 backbone used for Dataset~1.
The BirdTree taxonomy lumps \textit{Cyanistes}, \textit{Periparus}, \textit{Lophophanes}, and \textit{Poecile} under \textit{Parus}, so we maintain an explicit name mapping between IOC names used by Xeno-Canto and BirdTree tip names.
Posterior trees were pruned to the focal 19-species set and a consensus tree was computed using median branch lengths, implemented in Python with BioPython~\citep{cock2009biopython} rather than the R-based workflow used for Dataset~1.
The resulting tree has 19 tip nodes and 18 internal ancestral nodes.
The Paridae clade differs from Tyrannidae in several respects: it includes both Old and New World species rather than being restricted to a single biogeographic region, its vocalizations tend to be simpler call notes and whistled songs rather than complex flycatcher vocalizations, and its phylogenetic structure is shallower with less divergent branch lengths.
These differences, together with the distinct phylogenetic backbone and data curation pipeline, make it a stringent test of whether our method generalizes beyond the characteristics of a single dataset.

\subsection{Pretrained Components}\label{app:pretrained}

Our pipeline relies on two pretrained components that are frozen throughout all downstream experiments: a variational autoencoder and a neural vocoder. Neither component is fine-tuned or retrained for different datasets.

\smallskip\noindent\textbf{VAE.}\quad
We use the pretrained 1D convolutional VAE from Make-An-Audio~2~\citep{huang2023makeanaaudio2} without any retraining or fine-tuning on bird audio.
The model operates on 80-bin log-mel spectrograms and compresses each 5-second clip into a latent representation of shape $20 \times 156$, which we flatten to $D = 3{,}120$.
The same frozen weights are used for both datasets.

\smallskip\noindent\textbf{Vocoder.}\quad
Generated mel spectrograms are converted to 16\,kHz waveforms using a pretrained BigVGAN vocoder~\citep{lee2023bigvgan}.
BigVGAN is a high-fidelity neural vocoder based on large-receptive-field anti-aliased periodic activations, pretrained on general audio.
We use the vocoder as a frozen black box: the VAE decoder produces an 80-bin mel spectrogram, and BigVGAN synthesizes the corresponding waveform without any fine-tuning on bird audio.

\subsection{Tree-Metric Trait Learning}

The linear trait projection $W \in \R^{K \times D}$ is trained to maximize the Pearson correlation between pairwise squared trait distances and patristic distances (Eq.~\ref{eq:trait-obj}).
We optimize using Adam with learning rate $10^{-2}$ for 1{,}500 gradient steps.
The free parameter $A \in \R^{K \times D}$ is initialized from $\mathcal{N}(0, 0.05^2)$ and the Cholesky regularizer is $\varepsilon = 10^{-6}$.
We sweep over trait dimensions $K \in \{2, 4, 8, 16, 20, 21\}$ and select $K = 20$ based on downstream evaluation performance (Appendix~\ref{app:k_sensitivity}, Table~\ref{tab:k_sensitivity}).
Species-level trait representatives are formed by trimmed mean aggregation with trim fraction $\alpha = 0.3$.

\subsection{Ancestral Inference and Generation}

Brownian ancestral inference uses nugget regularization $\lambda = 10^{-3}$.
For each internal node, we draw $B = 3$ posterior samples with sampling temperature $\tau = 0.6$ and residual perturbation scale $\gamma = 0.25$.
Inverse lifting uses nearest-trait anchoring: for each ancestral trait sample, the anchor is selected as the training clip whose projected trait is nearest in $\ell_2$ distance.
Generated latents are decoded through the pretrained VAE decoder, and the resulting mel spectrograms are converted to 16\,kHz waveforms using a pretrained BigVGAN vocoder.

\subsection{Baseline Configurations}

All baselines use the same phylogenetic tree, the same trimmed-mean aggregation procedure with $\alpha=0.3$ applied in each method's own embedding space, and BM posterior sampling with nugget $\lambda=10^{-3}$, three posterior samples per node, and temperature $\tau=0.6$.
PhyloNN and Poincar\'e both use 20-dimensional embeddings, matching our trait dimension $K{=}20$, so that differences in performance reflect the representation rather than capacity.

\smallskip\noindent\textbf{Raw Latent BM} performs BM inference directly in the full 3{,}120-dimensional VAE latent space without any dimensionality reduction. Ancestral latents are decoded through the VAE decoder and vocoder.

\smallskip\noindent\textbf{BirdNET BM} extracts 1{,}024-dimensional embeddings from a pretrained BirdNET classifier~\citep{kahl2021birdnet} and uses these as the trait space for BM inference. Internal-node outputs are rendered via nearest-exemplar lookup: for each inferred ancestor embedding, the training clip with the closest BirdNET embedding in L2 distance is selected and its VAE latent is decoded.

\smallskip\noindent\textbf{Classical Acoustic} extracts a 33-dimensional feature vector per clip using \texttt{librosa}~\citep{mcfee2015librosa}, comprising 20 MFCCs, spectral centroid, spectral bandwidth, spectral rolloff, spectral flatness, zero-crossing rate, RMS energy, and 7-band spectral contrast. BM inference is performed on species-mean acoustic features, and rendering uses nearest-exemplar lookup in acoustic feature space.

\smallskip\noindent\textbf{PhyloNN}~\citep{elhamod2023phylonn} maps VAE latents to a 20-dimensional embedding via a two-layer MLP encoder with hidden dimension 512 and ReLU activation. The network is supervised with a species-level cross-entropy head and additional hierarchical clade-classification heads at three tree-depth levels, weighted by $\beta=0.5$. Training uses Adam with learning rate $10^{-3}$ for 800 epochs. BM inference is performed in the learned 20-dimensional space, and rendering uses nearest-exemplar lookup.

\smallskip\noindent\textbf{Poincar\'e}~\citep{carvallo2025siamese} uses the same MLP architecture as PhyloNN but projects outputs onto the Poincar\'e ball. The encoder is trained with an all-pairs contrastive loss with margin 2.0: same-species pairs are pulled together and different-species pairs are pushed apart in Poincar\'e distance. Training uses Adam with learning rate $10^{-3}$ for 800 epochs. At inference, BM is performed in the tangent space at the origin via the logarithmic map, and rendering uses nearest-exemplar lookup.

\smallskip\noindent\textbf{Rendering pipeline for all retrieval baselines.}\quad
For every retrieval baseline, nearest-exemplar lookup is performed in the method's own embedding space. The selected training clip is then rendered by decoding its precomputed VAE latent through the same VAE decoder and BigVGAN vocoder used by generative methods, so all outputs undergo identical VAE encode--decode distortion and audio quality comparisons are fair.

\subsection{Evaluation Setup}

All evaluation metrics are computed on the generated ancestral audio for the internal nodes, 20 for Dataset~1 and 18 for Dataset~2.
For each internal node, the $B{=}3$ posterior samples are independently decoded to audio, embedded via BirdNET, and averaged to produce a single per-node embedding used in all phylogenetic plausibility metrics. FAD metrics pool all $B \times |\mathcal{U}|$ generated clips as the generated distribution.
BirdNET~\citep{kahl2021birdnet} 1{,}024-dimensional embeddings serve as the primary evaluation representation for phylogenetic plausibility metrics.
FAD metrics use three complementary embedding spaces: BirdNET at 1{,}024 dimensions, PANNs CNN14~\citep{kong2020panns} at 2{,}048 dimensions, and CLAP~\citep{elizalde2023clap} with HTSAT-tiny backbone at 512 dimensions.

\subsection{Computational Resources}\label{app:compute}

All experiments are conducted on a single NVIDIA L4 GPU (24\,GB).
The computational cost of our method is modest: the only learned component is the linear trait projection $W \in \R^{K \times D}$, which is optimized for 1{,}500 Adam steps and converges in under 10 seconds per dataset.
Baseline training (PhyloNN, Poincar\'e) requires 800 epochs each, completing in approximately 2--3 minutes per model.
The full ancestral reconstruction pipeline for one dataset, comprising latent encoding, BM posterior inference, anchored inverse lifting, VAE decoding, and BigVGAN vocoding for all internal nodes with $B{=}3$ samples each, runs in approximately 60 seconds.
Evaluation is dominated by embedding extraction: BirdNET inference runs on CPU via TFLite, while PANNs and CLAP each require a single forward pass on GPU, with all evaluation metrics for one dataset completing in under 5 minutes.
The total wall-clock time for a complete run, training all methods, generating ancestral audio, and computing all metrics for both datasets, is under 30 minutes on a single L4 GPU.

\section{Evaluation Metric Definitions}\label{app:eval_metrics}

We define the five phylogenetic plausibility metrics and five audio quality metrics used in the main text. All metrics are computed on the generated ancestral audio for internal nodes $u \in \mathcal{U}$, using BirdNET embeddings as the evaluation representation unless otherwise noted.

\subsection{Phylogenetic Plausibility Metrics}

All BirdNET embeddings are L2-normalized to unit length.
Let $e_u \in \R^{D_{\mathrm{bn}}}$ denote the BirdNET embedding of the generated audio for internal node $u$, and let $e_s$ denote the BirdNET centroid embedding of real clips for extant species $s \in \mathcal{L}$.
Define the cosine distance $d_{\cos}(a,b) = 1 - a^\top b$ for unit-normed vectors $a, b$.

\smallskip\noindent\textbf{Tree--Embedding Correlation.}\quad
We compute the Spearman rank correlation between pairwise phylogenetic distances and pairwise cosine distances over all pairs of nodes $i, j \in \mathcal{U} \cup \mathcal{L}$, where internal nodes use their generated-audio embeddings and tips use their real-clip centroid embeddings:
\begin{equation}
  \mathrm{TreeCorr} = \rho_{\mathrm{Spearman}}\!\bigl(\{d_T(i,j)\}_{i<j},\; \{d_{\cos}(e_i, e_j)\}_{i<j}\bigr).
\end{equation}
Higher values indicate that the generated audio preserves the global geometric structure of the phylogeny. Including tip centroids tests whether generated ancestors are placed correctly relative to real species.

\smallskip\noindent\textbf{Descendant Affinity Margin.}\quad
For each internal node $u$, let $\mathcal{D}_u \subset \mathcal{L}$ be its set of descendant tips and $\mathcal{N}_u = \mathcal{L} \setminus \mathcal{D}_u$ its non-descendants. We compute:
\begin{equation}
  \mathrm{DescMarg} = \frac{1}{|\mathcal{U}|}\sum_{u \in \mathcal{U}} \Bigl[\bar{s}(u, \mathcal{D}_u) - \bar{s}(u, \mathcal{N}_u)\Bigr],
\end{equation}
where $\bar{s}(u, \mathcal{S}) = |\mathcal{S}|^{-1}\sum_{s \in \mathcal{S}} e_u^\top e_s$ is the mean cosine similarity. Positive values indicate that ancestors are, on average, more similar in embedding space to their own descendants than to unrelated species.

\smallskip\noindent\textbf{Edge Monotonicity.}\quad
At each binary internal node $p$ with two children $c_1, c_2$ connected by branch lengths $b_1, b_2$, we check whether the child on the longer branch is also more distant from the parent in cosine space:
\begin{equation}
  \mathrm{EdgeMono} = \frac{1}{|\mathcal{B}|}\sum_{p \in \mathcal{B}} \mathbf{1}\!\bigl[(b_1 > b_2) \Leftrightarrow \bigl(d_{\cos}(e_p, e_{c_1}) > d_{\cos}(e_p, e_{c_2})\bigr)\bigr],
\end{equation}
where $\mathcal{B}$ is the set of binary internal nodes with non-tied branch lengths, and children $c_1, c_2$ may be either internal nodes or tips. This measures whether branch lengths predict local acoustic divergence: a child separated by a longer branch should be more acoustically distant from the parent.

\smallskip\noindent\textbf{Nearest-Train Distance.}\quad
For each internal node $u$, we compute the minimum cosine distance to any training clip in BirdNET embedding space:
\begin{equation}
  \mathrm{NNDist} = \frac{1}{|\mathcal{U}|}\sum_{u \in \mathcal{U}} \min_{x \in \mathcal{X}_{\mathrm{train}}} d_{\cos}(e_u, e_x),
\end{equation}
where $\mathcal{X}_{\mathrm{train}}$ denotes individual training clip embeddings rather than species centroids. Higher values indicate that the generated ancestral audio is not simply copying a nearby training exemplar, suggesting genuine interpolation rather than memorization.

\smallskip\noindent\textbf{Unique Rate.}\quad
We measure the fraction of ancestors that receive distinct audio outputs by comparing MD5 hashes of the generated waveform files:
\begin{equation}
  \mathrm{Unique} = \frac{|\{\text{distinct hashes among } \hat{a}_u,\; u \in \mathcal{U}\}|}{|\mathcal{U}|}.
\end{equation}
A Unique Rate below 1.0 indicates retrieval collapse, where multiple ancestors map to the same training exemplar and produce byte-identical waveforms. All generative methods achieve Unique Rate 1.0 by construction since they decode distinct latent vectors.

\subsection{Audio Quality Metrics}

\smallskip\noindent\textbf{Cycle Similarity (CycleSim).}\quad
We measure VAE round-trip fidelity at the mel-spectrogram level: for each generated ancestral audio, we compute the mel spectrogram $M$, encode and decode it through the VAE to obtain the reconstructed mel $\hat{M} = D_{\mathrm{vae}}(E_{\mathrm{vae}}(M))$, and report the cosine similarity between the flattened originals and reconstructions:
\begin{equation}
  \mathrm{CycleSim} = \frac{1}{|\mathcal{U}|}\sum_{u \in \mathcal{U}} \frac{\langle \mathrm{vec}(M_u),\; \mathrm{vec}(\hat{M}_u)\rangle}{\|\mathrm{vec}(M_u)\| \cdot \|\mathrm{vec}(\hat{M}_u)\|}.
\end{equation}
Values close to 1 indicate that the generated audio lies on the VAE's decodable manifold and survives re-encoding without degradation.

\smallskip\noindent\textbf{Fr\'echet Audio Distance (FAD).}\quad
We compute the Fr\'echet distance~\citep{kilgour2019frechet} between the distribution of generated ancestral audio embeddings and the distribution of real training clip embeddings, in three complementary embedding spaces:
\begin{itemize}[nosep,leftmargin=13pt]
  \item \textbf{FAD-BN}: BirdNET~\citep{kahl2021birdnet} 1{,}024-dimensional embeddings, capturing species-level bioacoustic identity.
  \item \textbf{FAD-PANN}: PANNs CNN14~\citep{kong2020panns} 2{,}048-dimensional embeddings, capturing general audio event semantics.
  \item \textbf{FAD-CLAP}: CLAP~\citep{elizalde2023clap} 512-dimensional embeddings with HTSAT-tiny backbone, capturing audio--language aligned representations.
\end{itemize}
Each FAD is computed as $\mathrm{FAD} = \|\mu_g - \mu_r\|^2 + \mathrm{Tr}\!\bigl(\Sigma_g + \Sigma_r - 2(\Sigma_g \Sigma_r)^{1/2}\bigr)$, where $\mu_g, \Sigma_g$ and $\mu_r, \Sigma_r$ are the mean and covariance of the generated and real embedding distributions, respectively. Lower FAD indicates that the generated audio is more similar to real bird vocalizations in aggregate. Because the generated pool is small relative to the embedding dimensionality, the sample covariance $\Sigma_g$ is rank-deficient, so absolute FAD values carry upward bias. Cross-method comparisons remain valid because all methods use identical pool sizes and the same reference distribution of ${\sim}1{,}100$ real clips.

\smallskip\noindent\textbf{Silence Rate.}\quad
We compute the mean fraction of near-silent frames across all generated ancestral audio clips. A frame is considered silent if its root-mean-square energy falls below $-60$\,dBFS. High silence rates indicate degenerate generation where the method produces partially or fully silent waveforms.

\section{Full Qualitative Results}\label{app:qualitative}

Figures~\ref{fig:qualitative_appendix_1} and~\ref{fig:qualitative_appendix_2} show mel spectrograms for all 20 reconstructed ancestral nodes on Dataset~1 (Tyrannidae) across all six methods.
Figures~\ref{fig:qualitative_appendix_xc_1} and~\ref{fig:qualitative_appendix_xc_2} show the corresponding results for all 18 ancestral nodes on Dataset~2 (Paridae).
A central distinction visible throughout these figures is between methods that \emph{generate} novel ancestral audio and those that \emph{retrieve} existing training clips.
The retrieval-based baselines, Classical Acoustic, PhyloNN, and Poincar\'e, operate in non-decodable feature spaces and must render internal-node outputs by selecting the nearest training exemplar.
As a consequence, their spectrograms are visually sharp but frequently duplicated: multiple internal nodes map to the same training clip, and the outputs cannot represent genuinely intermediate ancestral sounds.
This is reflected quantitatively in Table~\ref{tab:main_phylo_results}, where these methods exhibit Unique Rates well below 1.0.

By contrast, our method generates novel spectrograms for every internal node with Unique Rate 1.0.
The anchored inverse lift ensures that each output inherits realistic acoustic texture from a nearby training clip via the nullspace component while its phylogenetic content is set by the BM posterior via the trait subspace.
Across the 20 nodes, our method consistently produces spectrograms with richer harmonic structure, more naturalistic temporal patterning, and fewer silence artifacts compared to the baselines.
The diversity of generated outputs, visible in the variation of spectral patterns across different ancestor nodes, demonstrates that the pipeline produces phylogenetically graded reconstructions rather than collapsing to a small set of prototypical sounds.

\begin{figure}[htb!]
\centering
\includegraphics[width=\linewidth]{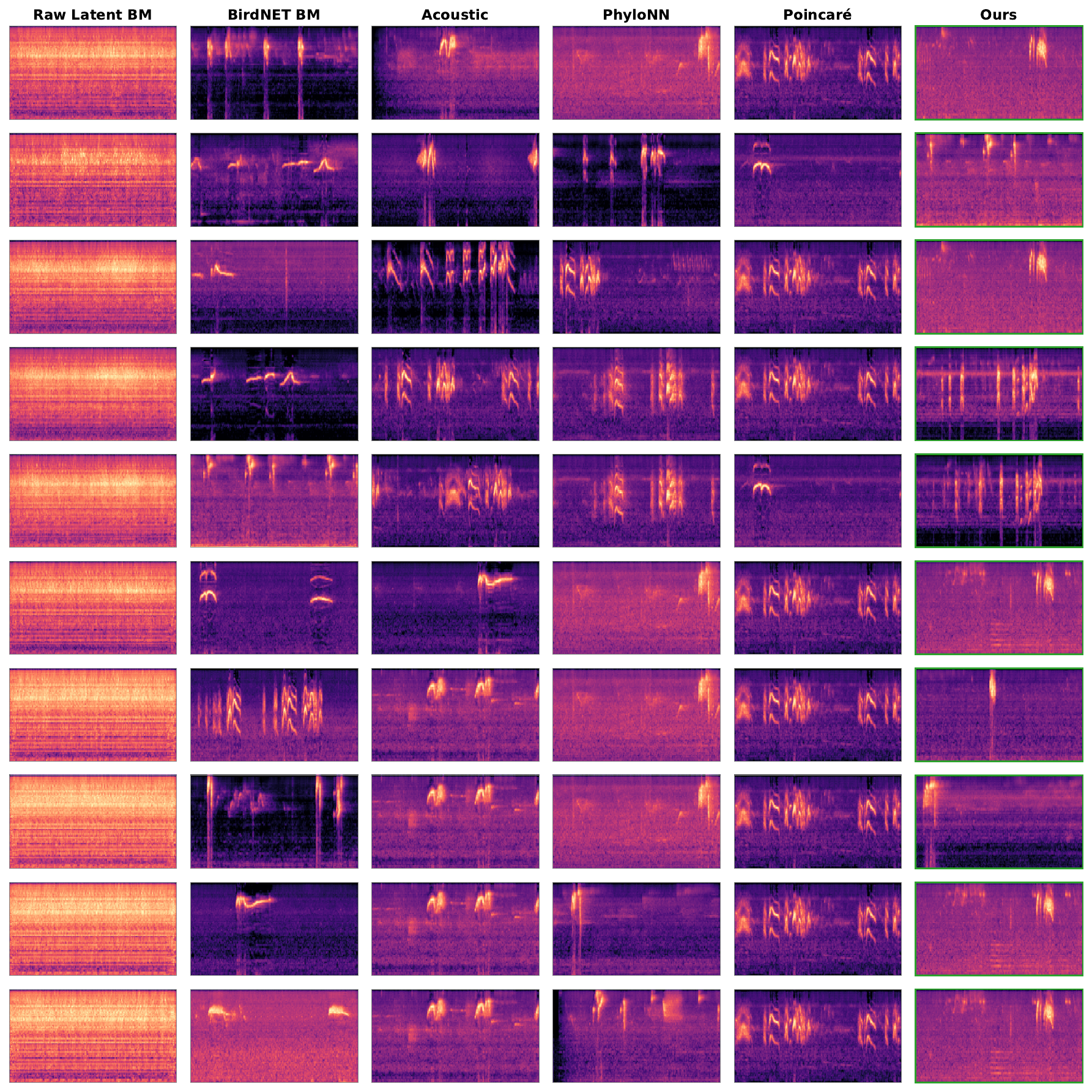}
\caption{\textbf{Ancestral sound reconstructions for internal nodes 1--10 on Dataset~1, Family Tyrannidae.}
Rows correspond to internal nodes labeled with ancestor index and genus; columns correspond to the six reconstruction methods.
Our method (green border) consistently produces spectrograms with richer harmonic structure and fewer silence artifacts compared to baselines.}
\label{fig:qualitative_appendix_1}
\end{figure}

\begin{figure}[htb!]
\centering
\includegraphics[width=\linewidth]{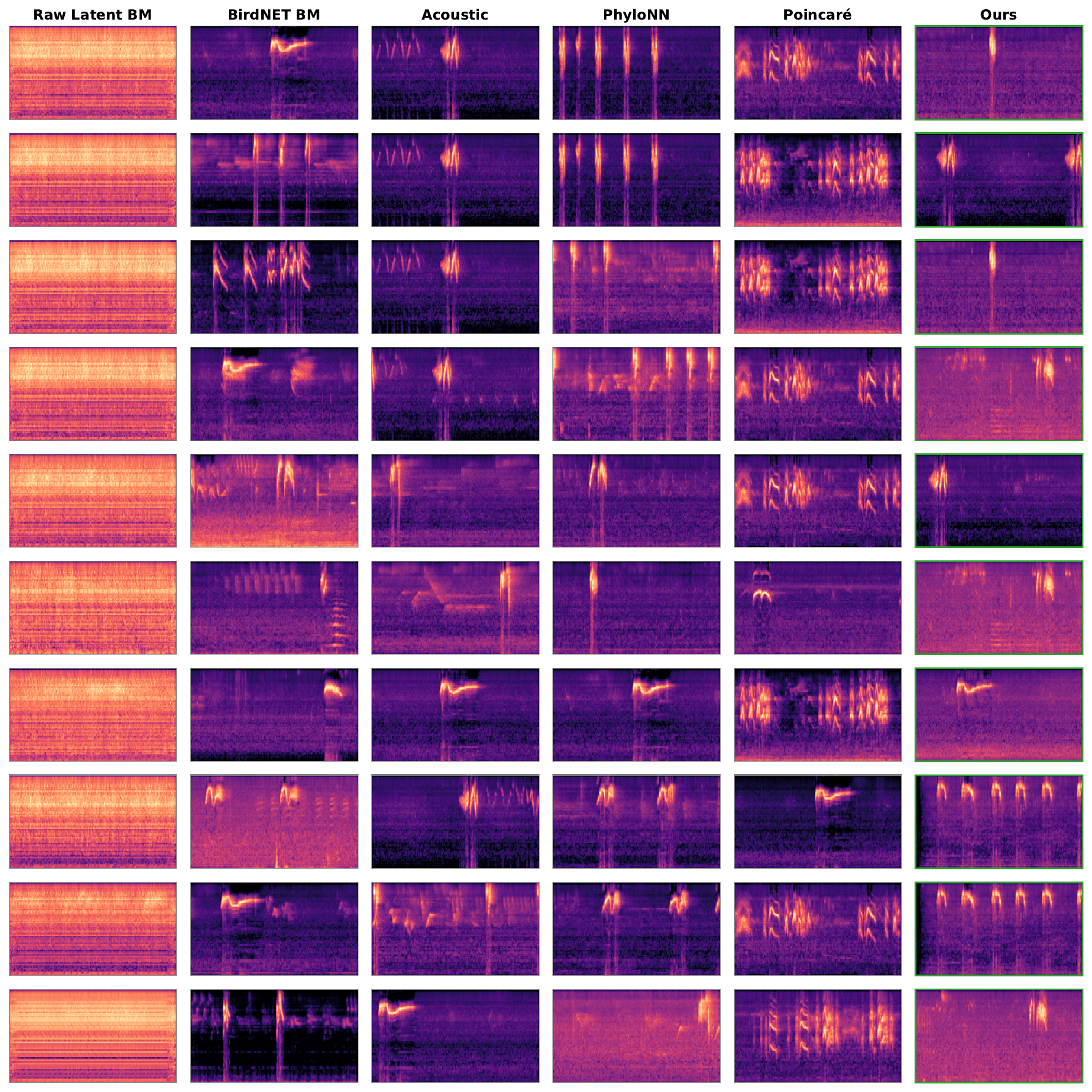}
\caption{\textbf{Ancestral sound reconstructions for internal nodes 11--20 on Dataset~1.}
Continuation of Figure~\ref{fig:qualitative_appendix_1}.}
\label{fig:qualitative_appendix_2}
\end{figure}

\begin{figure}[htb!]
\centering
\includegraphics[width=\linewidth]{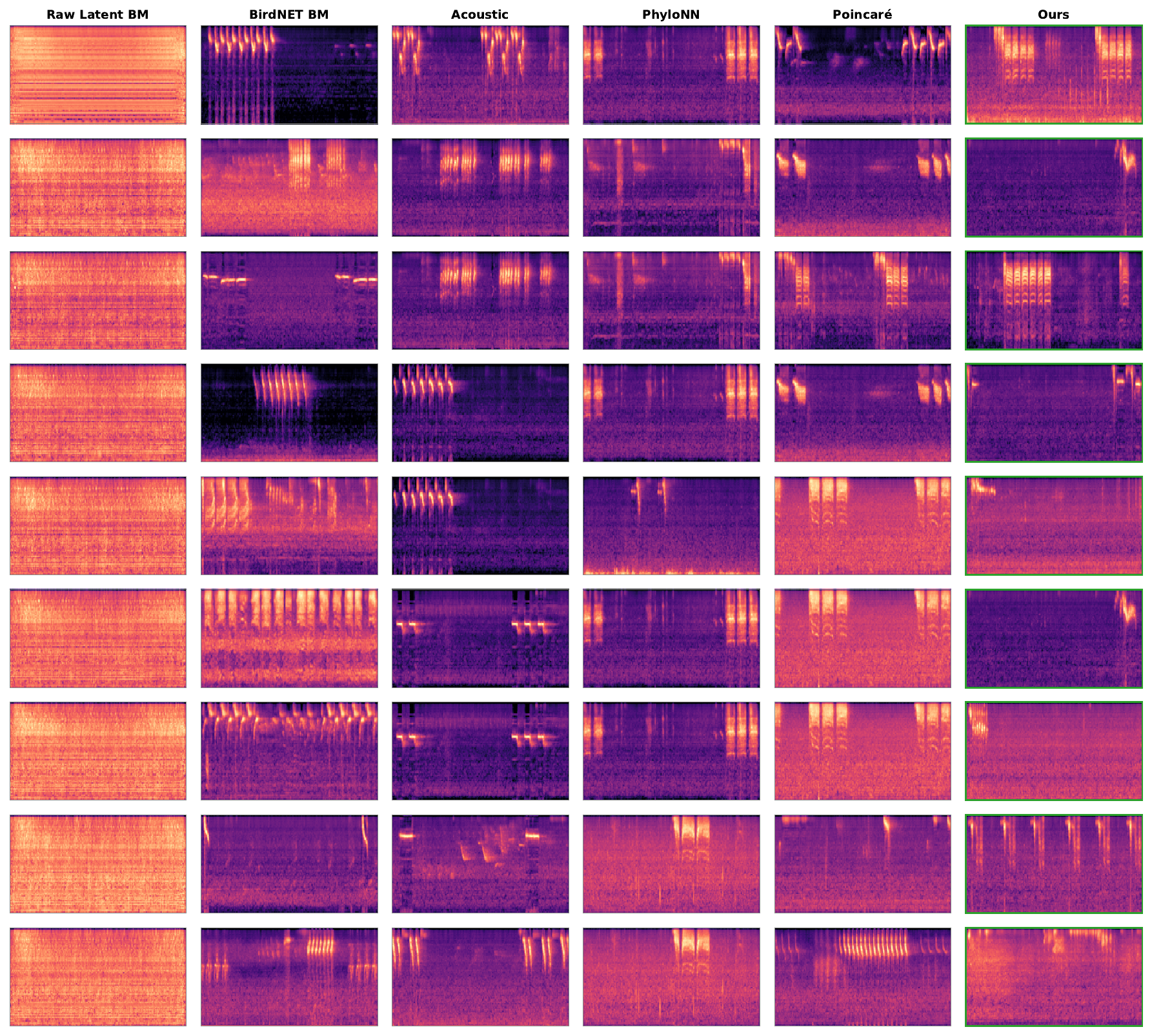}
\caption{\textbf{Ancestral sound reconstructions for internal nodes 1--9 on Dataset~2, Family Paridae.}
Same layout as Figures~\ref{fig:qualitative_appendix_1}--\ref{fig:qualitative_appendix_2} but on Dataset~2.
The same pattern holds: Raw Latent BM produces diffuse spectrograms, retrieval-based baselines duplicate training clips, and our method (green border) generates novel spectrograms with coherent harmonic structure.}
\label{fig:qualitative_appendix_xc_1}
\end{figure}

\begin{figure}[htb!]
\centering
\includegraphics[width=\linewidth]{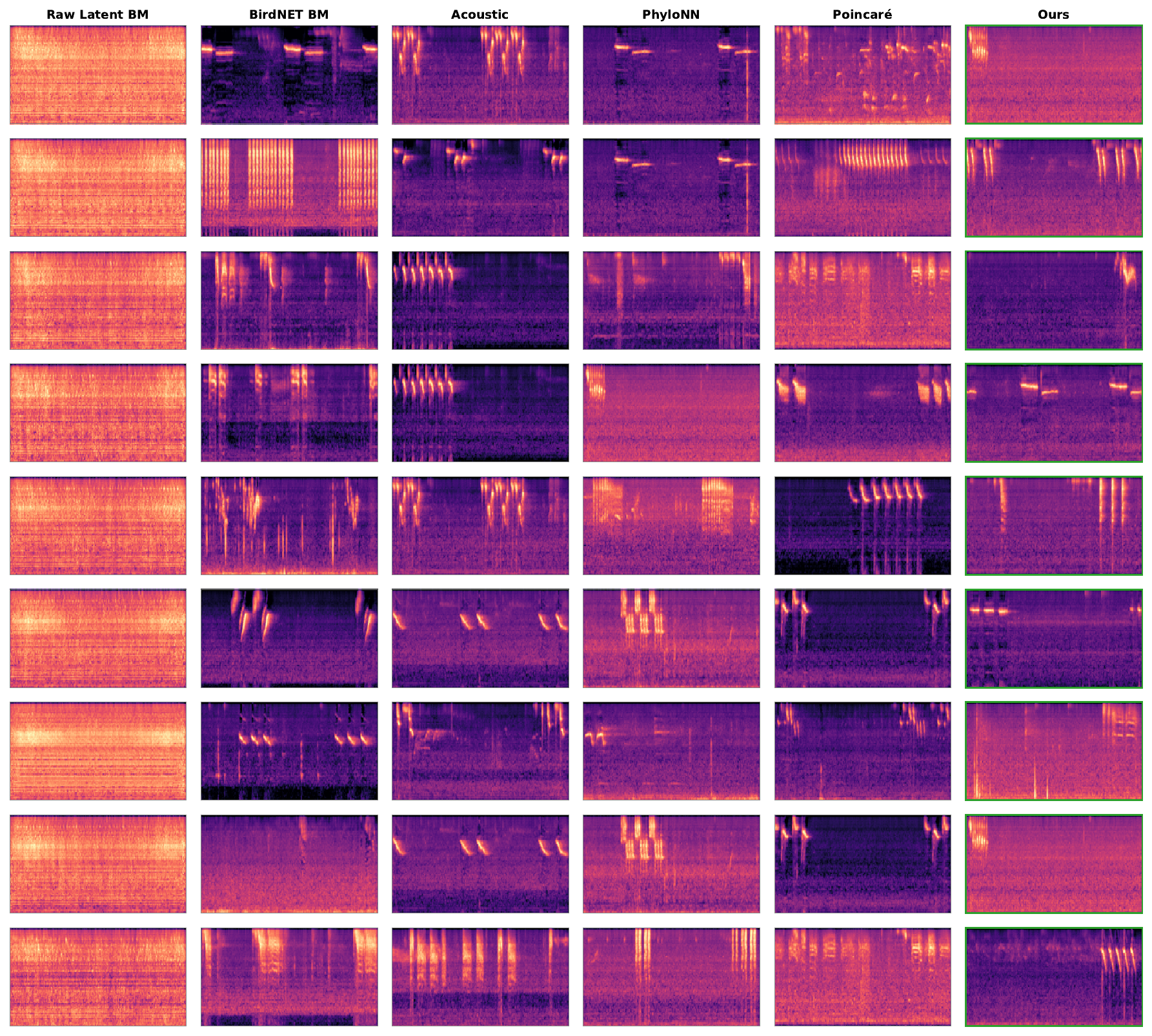}
\caption{\textbf{Ancestral sound reconstructions for internal nodes 10--18 on Dataset~2, Paridae.}
Continuation of Figure~\ref{fig:qualitative_appendix_xc_1}.}
\label{fig:qualitative_appendix_xc_2}
\end{figure}


\end{document}